\newtheorem{theorem}{Theorem}
\newtheorem{lemma}[theorem]{Lemma}
\newtheorem{proposition}[theorem]{Proposition}
\newtheorem{corollary}[theorem]{Corollary}
\theoremstyle{definition}
\newtheorem{observation}[theorem]{Observation}
\theoremstyle{remark}
\begin{document}

\title{Redundancies in Linear Systems with two Variables per Inequality }

\author{Komei Fukuda \thanks{Department of Mathematics and Department of Computer Science. Institute of Theoretical Computer Science,
 ETH Z\"{u}rich.  CH-8092 Z\"{u}rich, Switzerland.
  \texttt{komei.fukuda@math.ethz.ch}}
  \and 
  May Szedl{\'a}k\thanks{Department of Computer Science. Institute of Theoretical Computer Science, ETH Z\"{u}rich.
  CH-8092 Z\"{u}rich, Switzerland.
    \texttt{may.szedlak@inf.ethz.ch} \newline Research supported by the Swiss National Science Foundation (SNF Project 200021\_150055 / 1)}
  }

%\author{Komei Fukuda \thanks{Department of Computer Science, Institute of Theoretical Computer Science, ETH Z\"{u}rich, CH-8092 Z\"{u}rich, Switzerland, 
%  \texttt{\{fukuda, may.szedlak\}@inf.ethz.ch}.  
%  \newline Research supported by the Swiss National Science Foundation
%  (SNF Project 200021\_150055 / 1)} \and May Szedl\'ak $^*$
%}
\date{October 10, 2016}

\maketitle

\begin{abstract}
The problem of detecting and removing redundant constraints is fundamental in optimization. We focus on the case of linear programs (LPs), given by $d$ variables with $n$ inequality constraints. A constraint is called \emph{redundant}, if after its removal, the LP still has the same feasible region. The currently fastest method to detect all redundancies is due to Clarkson: it solves  $n$ linear programs, but each of them has at most $s$ constraints, where $s$ is the number of nonredundant constraints.

In this paper, we study the special case where every constraint has at most two variables with nonzero coefficients. This family, denoted by $LI(2)$, has some nice properties. Namely, as shown by Aspvall and Shiloach, given a variable $x_i$ and a value $\lambda$,  we can test in time $O(nd)$ whether there is a feasible solution with $x_i = \lambda$.
Hochbaum and Naor present an $O(d^2 n \log n)$ algorithm for solving the feasibility problem in $LI(2)$. Their technique makes use of the Fourier-Motzkin elimination method and the earlier mentioned result by Aspvall and Shiloach. 

We present a strongly polynomial algorithm that solves redundancy detection in time $O(n d^2 s \log s)$. It uses a modification of Clarkson's algorithm, together with a revised version of Hochbaum and Naor's technique.
Finally we show that dimensionality testing can be done with the same running time as solving feasibility.
\end{abstract}

\section{Introduction}
The problem of detecting and removing redundant constraints is fundamental in optimization. Being able to understand redundancies in a model is an important step towards improvements of the model and faster solutions.

Throughout we consider linear systems of inequalities of form $Ax \leq b$, for $A \in R^{n \times d}$, $b \in R^n$, $d<n$. 
The $j$-th constraint, denoted $A_jx \leq b_j$,  %denoted $A_jx \leq b_j$ for $j \in [n] = \{1,\dots n\}$, in the system $Ax \leq b$ 
is called \emph{redundant}, if its removal does not change the set of feasible solutions. By removing $A_jx \leq b_j$ from the system  we get a new system denoted $A_{[n] \setminus \{j\}}x \leq b_{[n] \setminus \{j\}}$. Assume that $Ax \leq b$ is feasible, then by solving the following linear program (LP) we can decide redundancy of  $A_jx \leq b_j$.
\begin{equation}\label{eq:LP_r}
\begin{array}{lrcl}
\mbox{maximize} & A_jx  \\ 
\mbox{subject to} &  A_{[n] \setminus \{j\}}x &\leq& b_{[n] \setminus \{j\}}. 
\end{array}
\end{equation}
%minimizing into direction $A_jx$ one can test redundancy of the constraint. 
Namely, a constraint $A_jx \leq b_j$ is redundant if and only if the optimal solution has value at most $b_j$.
%solution at most $b_j$.
%Let $Ax \leq b$ be a system of inequalities, where $A \in R^{n \times d}$, $b \in R^n$. Let us denote the $j$-th constraint of this system by $A_jx \leq b_j$ and by $A_{-j}x \leq b_{-j}$ the system $Ax \leq b$ without the $j$-th %constraint. The constraint $A_jx \leq b$ is called redundant if  $A_{-j}x \leq b_{-j}$ implies $A_jx \leq b_j$ or equivalently if there is no solution to the system 
%\begin{equation}\label{eq:system}
%\begin{array}{rcl}
%   A_{-j}x &\leq& b_{-j} \\
%   A_{-j}x &>& b_{-j}
%\end{array}
%\end{equation}
%Testing redundancy of $A_jx \leq b_j$ can be done by solving the linear program (LP)
%\begin{equation}\label{eq:LP_r}
%\begin{array}{lrcl}
%\mbox{minimize} & A_jx  \\ 
%\mbox{subject to} &  A_{-j}x &\leq& b_{-j}. 
%\end{array}
%\end{equation}
%Namely, a variable $A_jx \leq$ is redundant if and only if the LP has an optimal
%solution at most $b_j$.

Let $LP(n,d)$ denote the time needed to solve an LP with $n$ inequalities and $d$ variables. Solving $n$ linear programs of form (\ref{eq:LP_r}), with running time $LP(n,d)$, is enough for detecting all redundancies. The currently fastest method is due to Clarkson with running time $\mathcal{O}(n\cdot LP(s,d))$~\cite{c-mosga-94}, where we initially assume an interior point is given. This method also solves $n$ linear programs, but each of them has at most $s$ variables, where $s$ is the number of nonredundant variables. Hence, if $s\ll n$, this \emph{output-sensitive} algorithm is a major improvement.

In general no strongly polynomial time algorithm (polynomial in $d$ and $n$) to solve an LP is known. Although the simplex algorithm runs fast in practice, in general it can have exponential running time \cite{dantzig-simplex, klee-minty-simplex}. On the other hand the ellipsoid method runs in polynomial time on the encoding of the input size, but is not practical \cite{Khachiyan}. A first practical polynomial time algorithm, the interior-point method, was introcuded in \cite{Karmarkar}, and has been modified in many ways since.

In this paper we focus on the special case where every constraint has at most two variables with nonzero coefficients, we denote this family by $LI(2)$. 
%Throughout, if not stated otherwise, assume $Ax \leq b$ in $LI(2)$. 
Our main result is that for a full-dimensional system $Ax \leq b$ in $LI(2)$ we can detect all redundancies in time $O(nd^2 s \log s )$ (see Theorem \ref{thm_main}), where we assume that an interior point solution is given. To our knowledge, this is a first strongly polynomial time algorithm for redundancy detection in $LI(2)$. %The method applied is similar to Clarkson's algorithm.  

To obtain this running time we use an alternated version of Clarkon's algorithm, which solves feasibility problems instead of optimization problems.
Moreover our algorithm makes use of a modified version of Hochbaum and Naor's algorithm, which for a system in $LI(2)$   finds a feasible point or a certificate for infeasibility in time $O(d^2n \log n)$ \cite{HN}. This result is an improvement of Megiddo's algorithm with running time $O(d^3 n \log n)$ \cite{Megiddo}. Although their techniques are similar and both rely heavily on \cite{Aspvall}, the improved version is much simpler.

 We will give a summary of the Hochbaum-Naor Algorithm in Section \ref{ch_HN}. 
In Section \ref{sec_HNmod} we will give a stronger version of this algorithm, which decides full-dimensionality and in the full-dimensional case outputs an interior point. Using this variant of the algorithm together with our modification of Clarkson's algorithm we get an output sensitive, strongly polynomial time redundancy detection algorithm. 
In Section  \ref{sec_nonfull} we show how the results extend to non-full-dimensional systems (see Theorem \ref{thm_nonred}).  

In all cases the preprocessing can also be done in strongly polynomial time. Moreover, we show that dimensionality testing of a polytope $P = \{x \mid Ax \leq b\}$ can be done with the same running time as the feasibility testing method of Hochbaum and Naor (see Corollary \ref{cor_dim}). Note that for general LP's one needs to solve up to $d$ optimization problems. 

Although in $LI(2)$ one can find a feasible solution fast, it is not known how to find an optimal solution in strongly polynomial time. For general LPs a standard technique for converting an optimization problem into a feasibility problem is to use the dual linear program. However, the dual of a system in $LI(2)$ is generally not in $LI(2)$. If the objective function is in $LI(2)$, one can apply binary search on the value of the optimal solution, this gives an algorithm that depends on the input size. 

Note that Clarkson's algorithm relies on finding an optimal solution of a linear program. Since for $LI(2)$ we do not have a fast way to optimize, this is the reason why we modify the algorithm such that it only solves feasibility problems. %In this way we obtain a strongly polynomial running time (see proof of Theorem \ref{thm_main}). 

\section{Definitions and Preliminaries} 
As already mentioned in the introduction we throughout consider \emph{linear systems} of the form 
\[Ax \leq b,\]
where $A \in R^{n \times d}$, $b \in R^n$.

The set of inequalities of $Ax \leq b$ is denoted by $G$. 
A point $x^* \in R^d$ is a \emph{feasible solution} or \emph{feasible point} of $Ax \leq b$ (or $G$) if $Ax^* \leq b$. It is called an \emph{interior point solution} of $Ax \leq b$ (or $G$) if all inequalities are satisfied with strict inequality, i.e., $Ax^* < b$ (where "$<$" denotes the componentwise strict inequality). The system $Ax \leq b$ (or $G$) is called \emph{feasible} if a feasible solution exists, otherwise it is called \emph{infeasible}. If an interior point solution exists, the system is called \emph{full-dimensional}. The system $Ax \leq b$ is called $k$-dimensional if the solution set $\{x \mid Ax \leq b\}$ is $k$-dimensional.

%Let us denote the $j$-th constraint of this system by $A_jx \leq b_j$ and by $A_{-j}x \leq b_{-j}$ the system $Ax \leq b$ without the $j$-th constraint.
For a subset $ S \subseteq [n]:=\{1,\dots, n\}$ we denote by $A_Sx \leq b_S$ the subsystem of $Ax \leq b$ containing only the inequalities indexed by $S$. In particular the $j$-th constraint is denoted by $A_jx \leq b_j$. This constraint is called \emph{redundant} if  $A_{[n] \setminus \{j\}}x \leq b_{[n] \setminus \{j\}}$ implies $A_jx \leq b_j$ or equivalently if there is no solution to the system 
\begin{equation*}\label{eq:system}
\begin{array}{rcl}
   A_{[n] \setminus \{j\}}x &\leq& b_{[n] \setminus \{j\}} \\
   A_{j}x &>& b_{j}.
\end{array}
\end{equation*}
%Testing redundancy of $A_jx \leq b_j$ can be done by solving the linear program (LP)
%\begin{equation}\label{eq:LP_r}
%\begin{array}{lrcl}
%\mbox{minimize} & A_jx  \\ 
%\mbox{subject to} &  A_{-j}x &\leq& b_{-j}. 
%\end{array}
%\end{equation}
As mentioned in the introduction, we can test redundancy of a constraint $A_jx \leq b_j$ by solving an LP of form (\ref{eq:LP_r}).

For a feasible system $Ax \leq b$ and a variable $x_i$ let $[x_i^{\min}, x_i^{\max}]$ be the projection of the solution space of $Ax \leq b$ to the $x_i$-axis, we call this the \emph{range} of $x_i$. This interval is exactly the set of values of $x_i$ for which a solution of the entire system can be constructed. It is possible that $x_i^{\min} = -\infty$ or $x_i^{\max} = \infty$.
%Throughout the paper we are interested in finding a feasible solution of a linear system (as opposed to optimizing). Therefore, 
%w.l.o.g.\ assume that all ranges are bounded. If the range of some variable $x_i$ is unbounded, then there exists numbers bounded by a polynomial in the size of the binary representation of the input data, such that setting $x_i^{\min}$ and $x_i^{\max}$ to them assures the existence of a feasible solution \cite{Schrijver}.

In this paper we are interested in sparse linear systems, in particular the family $LI(2)$. A linear system is in $LI(2)$, if every constraint has at most two variables with nonzero coefficients. That means all inequalities have form $\alpha x_i + \beta x_j \leq \gamma$ for some $\alpha$, $\beta$, $\gamma \in R$, $\alpha \neq 0$.

We define the \emph{neighbors} of $x_i$ in $G$, denoted $N(x_i, G)$, as the set of variables $x_j$, $j \in [d] \setminus \{i\}$ for which there exists an inequality in $G$ containing $x_i$ and $x_j$ with nonzero coefficients. 

The system $(Ax \leq b)|_{x_i =c}$ (or $G |_{x_i =c}$) is obtained form $Ax \leq b$ (or $G$) by substituting the variable $x_i$ by the constant $c$, it hence has one variable less than the original system.

%The algorithm efficiently assigns a value $\lambda \in [x_i^{\min}, x_i^{\max}]$ to the variable $x_i$. %This reduces to the number of variables by one and yields a new system of inequalities that is satisfiable if and only if $S$ is satisfiable. 

%Then we define the neighbors of $x_i$ in $G$, denoted $N(x_i, G)$, as the set of variables $j \in [d] \setminus \{i\}$ for which there exists an inequality in $G$ containing $i$ and $j$ with nonzero coefficients. The system $G |_{x_i =c}$ is obtained form $G$ by substituting the variable $x_i$ by the constant $c$, it hence has one variable less than $G$.

%For a system $Ax \leq b$ we define its corresponding graph $G = (V,E)$ as follows. $V = \{x_0, x_1, \dots, x_d\}$, where $x_1, \dots, x_d$ are the variables from $Ax \leq b$. For each inequality $\alpha x_i + \beta x_j \leq \gamma$ we add an edge between $x_i$ and $x_j$ and label it with the inequality. For each inequality of form $\alpha x_i \leq \gamma$ we add a edge between $x_i$ and $x_0$ and label it with the inequality. 

\section{A Strongly Polynomial Time  Redundancy Detection Algorithm for Linear Programs with two Variables per Inequality}
In this section we will prove our main result, the running time of the strongly polynomial algorithm to detect all redundancies in $LI(2)$ (see Theorem \ref{thm_main}). 

\begin{figure*}[h]
\begin{center}
\includegraphics{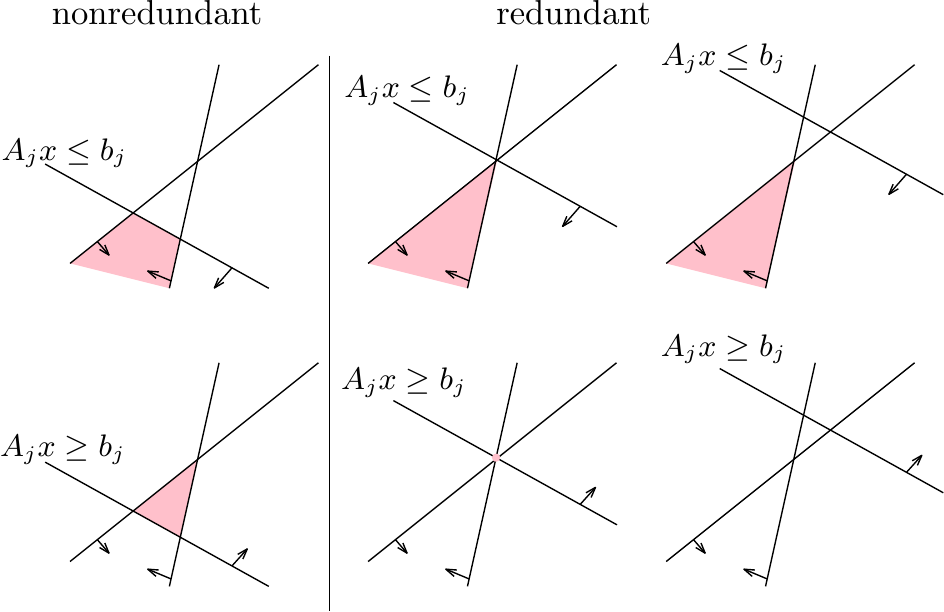}
\end{center}
\caption{Redundancy Certificates} \label{fig_redcases}
\end{figure*}

We make use of the following modified version of Hochbaum and Naor's result (Theorem \ref{thm_modHN}). We will discuss their original result in Section \ref{sec_HN} and the validity of the modification in Section \ref{sec_HNmod}. In Section \ref{sec_nonfull} we will discuss how the results extend to non-full-dimensional systems.

\begin{theorem} 
\label{thm_modHN}
For a system $Ax \leq b$ in $LI(2)$ one can decide in time $O(d^2 n \log n)$ whether the system is full-dimensional, and in the full-dimensional case output an interior point solution.
\end{theorem}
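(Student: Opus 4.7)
The plan is to modify the Hochbaum-Naor feasibility algorithm so that at each recursive step it maintains a \emph{strictly} feasible solution. Correctness will rest on the following reduction lemma: the system $G$ in $LI(2)$ on $d \ge 1$ variables is full-dimensional if and only if for some (equivalently, every) variable $x_i$ the range $[x_i^{\min}, x_i^{\max}]$ has nonempty interior and, for every $c \in (x_i^{\min}, x_i^{\max})$, the restricted system $G|_{x_i = c}$ is full-dimensional on $d-1$ variables. The forward direction follows from the fact that for a convex polyhedron $P$ the projection of $\operatorname{int}(P)$ onto the $x_i$-axis equals the interior of its projection, namely $(x_i^{\min}, x_i^{\max})$; restricting any interior point of $P$ with $x_i = c$ gives an interior point of $G|_{x_i = c}$. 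For the converse, take an interior point $(x_j^*)_{j \neq i}$ of $G|_{x_i = c}$: each compound constraint $\alpha x_i + \beta x_j \le \gamma$ becomes $\beta x_j^* < \gamma - \alpha c$ in the reduced system and lifts back to a strict inequality in $G$, while each pure-$x_i$ constraint $\alpha x_i \le \gamma$ is strict because the pure-$x_i$ feasible set contains $[x_i^{\min}, x_i^{\max}]$ and $c$ lies strictly inside, so $(c, (x_j^*)_{j \neq i})$ is an interior point of $G$.

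Given the lemma, the algorithm is a straightforward recursion on $d$. At each call, pick a variable $x_i$ and use (a small adaptation of) the Hochbaum-Naor range-search machinery to compute the endpoints $x_i^{\min}$ and $x_i^{\max}$ in $O(dn \log n)$ time. If the range is empty or degenerate, return ``not full-dimensional''. Otherwise pick an interior value $c$ (the midpoint if both bounds are finite, a shifted finite bound if one is infinite, or $0$ if both are infinite), build the $LI(2)$ system $G|_{x_i=c}$ on $d-1$ variables by substituting $x_i = c$ into the mixed constraints and discarding the now-trivial pure-$x_i$ constraints, and recurse. If the recursive call returns an interior point $y^*$, output $(c, y^*)$. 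The base case $d = 1$ reduces to checking that the intersection of half-lines has positive length and returning its midpoint.

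For the running time, each of the at most $d$ recursion levels performs one range computation on a system with at most $n$ constraints at cost $O(dn \log n)$, summing to the claimed $O(d^2 n \log n)$. The main obstacle I anticipate is that the original Hochbaum-Naor procedure is designed to return some feasible value of $x_i$, not both endpoints of the range; I expect this is resolvable by invoking the parametric search from below and from above (equivalently, running the procedure on the system and on its sign-flipped-in-$x_i$ counterpart), which affects only constants. A secondary subtlety is ensuring that the substitution step produces a valid $LI(2)$ instance without blowing up the constraint count and that infinite range endpoints are handled consistently with the strict-inclusion check; these are bookkeeping issues rather than conceptual ones.
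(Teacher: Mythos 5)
There is a genuine gap, and it sits at the heart of your algorithm. Your reduction lemma is fine (it is essentially the paper's Lemma~\ref{lemma_fullHN}), but your recursion hinges on the primitive ``compute the endpoints $x_i^{\min}$ and $x_i^{\max}$ in $O(dn\log n)$ time.'' Computing $x_i^{\max}$ exactly is the LP optimization problem $\max x_i$ subject to $Ax\le b$, and for $LI(2)$ no strongly polynomial algorithm for optimization is known --- the paper states this explicitly (only binary search on the optimal value, which depends on the encoding size, is available). The Aspvall--Shiloach result (Theorem~\ref{proc_AS}) is only a decision oracle: given a value $\lambda$, it locates $\lambda$ relative to the range; it does not produce the range endpoints. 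The Hochbaum--Naor machinery never computes them either: it binary-searches only over the finitely many envelope breakpoints $B^i$, and the endpoints $x_i^{\min},x_i^{\max}$ are in general not breakpoints (they are determined by chains and cycles running through many variables, i.e.\ by vertices of the full polyhedron, not by the pairwise envelopes). Your proposed repair --- running the procedure ``from below and from above'' or on a sign-flipped system --- still only yields decisions at queried values, not the endpoints, and binary search over the reals is not strongly polynomial.

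This is exactly why the actual algorithm has a second branch that your recursion omits. When no breakpoint lies in the (open) range, Hochbaum--Naor do \emph{not} fix $x_i$; they bracket the range between two consecutive breakpoints $b_\ell,b_{\ell+1}$, discard all but at most two inequalities per neighbor of $x_i$ on that interval, and eliminate $x_i$ by a pruned Fourier--Motzkin step, recovering a value for $x_i$ only in the backtracking phase. The paper's modification for full-dimensionality is precisely: test membership in the \emph{open} range $(x_i^{\min},x_i^{\max})$ when deciding whether to substitute a breakpoint, use weak inequalities $b_\ell\le x_i^{\min}$, $x_i^{\max}\le b_{\ell+1}$ in the bracketing case, and add Lemma~\ref{lemma_fullFM}, which shows the Fourier--Motzkin step preserves the existence of an interior point and that strict-inequality backtracking recovers one. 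Without this elimination branch your algorithm cannot proceed whenever the open range contains no breakpoint, so as written the proof does not go through.
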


\begin{theorem}
\label{thm_main}
Let $Ax \leq b$ a full-dimensional system in $LI(2)$. Let $z$ be an interior point solution of $Ax \leq b$ and let $z^\epsilon := z+(\epsilon, \dots, \epsilon^d)^T$ for some $\epsilon$ small enough, i.e., $z^\epsilon$ is a generic interior point. Then the following algorithm detects all redundancies in time $O(nd^2s\log s)$.

%\noindent Let $z$ be an interior point solution of $Ax \leq b$, $z^\epsilon := z+(\epsilon, \dots, \epsilon^d)$ for some $\epsilon$ small enough, i.e., $z^\epsilon$ is a generic interior point.
 \medskip
\noindent
\begin{tabbing}
123 \= 123 \= 123 \= 123 \= 123 \= 123 \= 123 \= 123 \= \kill
\> Algorithm Modified Clarkson ($A$,$b$,$z$);\\
\> {\bf begin }\\
\> \> $R := \emptyset, S := \emptyset$;\\
\> \> {\bf while} $R \cup S \neq [n]$ {\bf do}\\
\> \> \> pick any $r \in [n] \setminus (R \cup S)$ and use Theorem \ref{thm_modHN} on $\{A_Sx \leq b_S$, $a_r x\geq b_r\}$; \\
\> \> \> {\bf if} $A_Sx \leq b_S$, $a_rx \geq b_r$ not full-dimensional {\bf then} \\
\> \> \> \> $R = R \cup \{r\}$;\\
\> \> \> {\bf else}  let $x^*$ be an interior point solution  of $A_Sx \leq b_S$, $a_r x\geq b_r$ {\bf then} \\
\> \> \> \> $S = S \cup \{q\}$, where $q = \text{RayShoot}(A,b,z^\epsilon,x^*-z^\epsilon)$; \\
%\> \> \> \> {\bf if} $r$ is nonredundant w.r.t.\ $E \setminus R$ {\bf then}\\
%\> \> \> \> \> $S = S \cup \{r\}$; \\
%\> \> \> \> {\bf else} /* $r$ redundant w.r.t.\ $E \setminus R$ */ \\
%\> \> \> \> \> Find some sets $S^F \subseteq S'$ and $R^F \subseteq R'$ such that $S^F \nsubseteq S$; \\
%\> \> \> \> \> $R = R \cup R^F$, $S = S \cup S^F$; \\
%\> \> \> \> {\bf endif}; \\
\> \> \> {\bf endif};\\
\> \> {\bf endwhile}; \\
%\> \> $S^*: = S$; \\
\> \>  output $S$; \\ 
\> {\bf end}.\\       
\end{tabbing}
The function RayShoot$(A,b,z,t)$ returns the index $q$ of a facet-inducing hyperplane $\{x: A_qx = b_q\}$, which is hit first by the ray starting at $z$ along the direction of $t$. %Since $z^\epsilon$ is generic, in the algorithm this hyperplane is unique.
\end{theorem}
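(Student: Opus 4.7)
The plan is to establish two properties: (i) each iteration strictly enlarges $R \cup S$ by adding one correctly classified constraint, so after at most $n$ rounds we have $R \cup S = [n]$ with $R$ the set of redundant constraints and $S$ the set of nonredundant ones; and (ii) the per-iteration cost is dominated by the call to Theorem \ref{thm_modHN} on a system with $O(s)$ inequalities.

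For the first case (the test returns not full-dimensional), I would argue by contraposition that whenever $\{A_S x \leq b_S,\, A_r x \geq b_r\}$ fails to be full-dimensional, the constraint $A_r x \leq b_r$ must be redundant in $Ax \leq b$. Suppose $r$ is not redundant; then some $\hat x$ satisfies $A_{[n]\setminus\{r\}}\hat x \leq b_{[n]\setminus\{r\}}$ together with $A_r \hat x > b_r$. Blending $\hat x$ with the strict interior point $z^\epsilon$ via $\hat x_\lambda := \lambda \hat x + (1-\lambda) z^\epsilon$, I would pick $\lambda < 1$ close enough to $1$ so that $A_r \hat x_\lambda > b_r$ is preserved while the convex combination also yields $A_S \hat x_\lambda < b_S$ strictly (using that $\hat x$ satisfies every $s' \in S$ non-strictly and $z^\epsilon$ satisfies them strictly). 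This strict interior point contradicts the assumed non-full-dimensionality.

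For the second case, I would prove that the index $q$ returned by RayShoot is nonredundant and does not yet lie in $R \cup S$. Along the segment $y(t) := z^\epsilon + t(x^* - z^\epsilon)$ for $t \in [0,1]$, every constraint in $S$ is strictly satisfied at both endpoints and therefore strictly along the whole segment; whereas $A_r y(1) = A_r x^* > b_r$ forces the ray to exit the feasible region at some $t^* \in (0,1)$, so the first facet hit cannot be indexed by $S$. To conclude that this first-hit index is not in $R$ either, I would invoke the genericity of $z^\epsilon$ provided by the polynomial perturbation $(\epsilon, \ldots, \epsilon^d)$: for every redundant constraint $j$ the hyperplane $\{A_j x = b_j\}$ meets the polytope only in a face of dimension at most $d-2$, and a generically placed ray crosses any such hyperplane strictly later than the first facet-inducing one. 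I expect this genericity argument---ruling out that a redundant hyperplane precedes or is tied with a true facet along the ray---to be the main technical obstacle.

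Combining the two cases, each iteration enlarges $|R \cup S|$ by exactly one and classifies the new index correctly, so the loop terminates after at most $n$ iterations with $R$ and $S$ exactly the sets of redundant and nonredundant constraints. For the running time, each iteration costs $O(d^2 s \log s)$ for the invocation of Theorem \ref{thm_modHN} on a system with at most $s+1$ inequalities in $d$ variables, plus $O(nd)$ for the ray shoot; aggregating over the at most $n$ iterations and absorbing the lower-order ray-shoot term into the dominant one yields the claimed bound of $O(n d^2 s \log s)$.
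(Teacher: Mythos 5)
Your correctness argument is essentially the paper's: the same Clarkson-style invariant (every index put into $R$ is redundant, every index put into $S$ is nonredundant and not yet recorded), the same use of strict satisfaction of the $S$-constraints at both endpoints of the segment to conclude $q \notin S$, and the same appeal to genericity of $z^\epsilon$ to ensure the first hyperplane hit is facet-inducing and unique. Your case-1 argument via the blend $\lambda \hat x + (1-\lambda)z^\epsilon$ is a direct proof of what the paper gets from Observation \ref{obs_fulldim} together with monotonicity of full-dimensionality under adding constraints; both are fine. The genericity step you flag as the main technical obstacle is treated in the paper at exactly the same level of rigor: it is asserted from the hypothesis that $z^\epsilon$ is generic (and is partly built into the definition of RayShoot), with the same implicit assumption you use, namely that a redundant constraint of a full-dimensional system touches $P$ only in a face of dimension at most $d-2$ (no duplicated facet hyperplanes). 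So on correctness you are not missing anything the paper supplies.

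The one genuine flaw is in the running-time accounting. You charge $O(nd)$ for a ray shoot in each of the up to $n$ iterations and then declare the aggregate ``lower-order''; as written this term is $O(n^2 d)$, which is \emph{not} dominated by $O(nd^2 s\log s)$ precisely in the regime the theorem is meant for, $s \ll n$ (take $d$ and $s$ constant and $n$ large). The paper's accounting avoids this: a ray shoot occurs only in iterations that add an element to $S$, and since $S$ always remains a subset of the $s$ nonredundant constraints (an invariant your correctness argument already establishes, and which you also need to bound each call of Theorem \ref{thm_modHN} by $s+1$ inequalities), there are at most $s$ ray-shooting stages, for a total of $O(nds)$, which is indeed absorbed into $O(nd^2 s \log s)$. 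With that one correction your proof is complete.
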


%Note that Theorem \ref{thm_modHN} immediately implies the following corollary.
%\begin{corollary}
Note that Theorem \ref{thm_modHN} immediately implies that the interior point $z$ of Theorem \ref{thm_main} can be found in strongly polynomial time  $O(nd^2\log n)$. It follows that finding all redundancies and the preprocessing can be achieved in strongly polynomial time.
%\end{corollary}

Using Theorem \ref{thm_modHN} and the following observation (see also Figure \ref{fig_redcases}) we can prove Theorem \ref{thm_main}.

\begin{observation}
\label{obs_fulldim}
Let $Ax \leq b$ be a full-dimensional system in $LI(2)$. Then for $j \in [n]$ the following are equivalent.
\begin{enumerate}
	\item $A_j x \leq b_j$ is nonredundant in $Ax \leq b$.
	\item $A_j x \leq b_j$ is facet-inducing for $P = \{x \mid Ax \leq b\}$.
	\item The system  $A_{[n]\setminus \{j\}}x \leq b_{[n]\setminus \{j\}}$, $A_j x \geq b_j$ is full-dimensional.
\end{enumerate}
\end{observation}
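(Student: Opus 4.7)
The plan is to prove the three conditions are equivalent by showing $(1)\Leftrightarrow (3)$ and $(2)\Leftrightarrow (3)$. A first remark worth flagging is that the $LI(2)$ hypothesis plays no role whatsoever: the observation is a general fact about full-dimensional polyhedra, with the one essential tool being convex combinations against an interior point $z$ of $P$ guaranteed by full-dimensionality (so that $A_k z < b_k$ for every $k\in[n]$).

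For $(1)\Rightarrow (3)$, nonredundancy supplies a witness $x'$ satisfying $A_{[n]\setminus\{j\}}x'\le b_{[n]\setminus\{j\}}$ and $A_j x' > b_j$. First I would form $y_\lambda := \lambda x' + (1-\lambda)z$. Since $A_k z < b_k$ for every $k$, pushing $\lambda$ slightly below $1$ makes $A_k y_\lambda < b_k$ strict for all $k\neq j$, while $A_j y_\lambda > b_j$ is preserved by continuity of $A_j$ along the segment. Hence $y_\lambda$ is an interior point of the flipped system. The converse $(3)\Rightarrow (1)$ is immediate: any interior point of $\{A_{[n]\setminus\{j\}}x\le b_{[n]\setminus\{j\}},\,A_j x\ge b_j\}$ lies in the relaxation of $P$ but strictly violates $A_j x\le b_j$.

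For $(3)\Rightarrow (2)$, given an interior point $x^*$ of the flipped system, the segment from $z$ to $x^*$ meets the hyperplane $H_j:=\{x: A_j x=b_j\}$ at a unique point $y$, since $A_j z<b_j<A_j x^*$. Convexity of the strict inequalities then gives $A_k y<b_k$ for all $k\neq j$, so $y$ lies in the relative interior of $P\cap H_j$; a small enough $(d-1)$-dimensional neighborhood of $y$ inside $H_j$ therefore stays in $P\cap H_j$, certifying that $P\cap H_j$ has dimension $d-1$ and is a facet. For $(2)\Rightarrow (3)$, a relative interior point $y$ of the facet $F=P\cap H_j$ satisfies $A_j y=b_j$ and $A_k y<b_k$ for $k\neq j$ (otherwise $y$ would lie on a proper subface). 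Shifting $y$ by a sufficiently small positive multiple of $A_j^\top$ strictly increases its $A_j$-value above $b_j$ while, by continuity, preserving all other strict inequalities, yielding the required interior point of the flipped system.

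The argument is entirely standard convex geometry and I do not expect a real obstacle; the only care needed is to verify that each perturbation stays within the open halfspaces indexed by $k\neq j$, which follows from the strict inequality enjoyed by $z$ (or by the relative interior point of the facet) together with continuity of the linear forms $A_k$.
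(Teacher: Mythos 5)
The paper offers no proof of this observation at all---it is stated as immediate, with Figure~\ref{fig_redcases} as illustration---so there is nothing of the paper's to compare against; your write-up supplies the missing standard polyhedral argument, and your remark that the $LI(2)$ hypothesis is irrelevant is correct. The implications $(1)\Leftrightarrow(3)$ and $(3)\Rightarrow(2)$ are proved correctly: the convex combination of the nonredundancy witness $x'$ with the interior point $z$, and the segment argument locating $y$ on $H_j=\{x: A_jx=b_j\}$ with all other constraints strict, are exactly right.

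The one step with a genuine gap is $(2)\Rightarrow(3)$, at the claim that a relative interior point $y$ of the facet $F=P\cap H_j$ satisfies $A_ky<b_k$ for all $k\neq j$ ``otherwise $y$ would lie on a proper subface.'' If $A_ky=b_k$ for some $k\neq j$, then the face $F\cap\{x: A_kx=b_k\}$ contains a relative interior point of $F$ and hence equals $F$; since $F$ is $(d-1)$-dimensional, this forces $\{x: A_kx=b_k\}\supseteq\operatorname{aff}(F)=H_j$, i.e.\ the $k$-th constraint is a positive multiple of the $j$-th. So your claim is justified precisely when no other constraint induces the hyperplane $H_j$, and you should make that argument (or assumption) explicit. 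In the degenerate case of such a duplicate, your perturbation of $y$ along $A_j^\top$ violates the duplicate constraint, and in fact the observation itself fails there: if a facet-inducing inequality appears twice (up to positive scaling), each copy satisfies (2), yet each is redundant and the flipped system forces $A_jx=b_j$, so (1) and (3) fail. The gap is thus really an implicit nondegeneracy assumption in the statement rather than a flaw in your strategy; note that the algorithm of Theorem~\ref{thm_main} only uses $(1)\Leftrightarrow(3)$, which your proof establishes unconditionally.
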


\begin{proof} [Proof of Theorem \ref{thm_main}.]
%The generic interior point $z$ of $Ax \leq b$, can be found in time $O(d^2 n \log n)$ by Theorem \ref{thm_modHN}.
We have to show that the modified Clarkson Algorithm returns $S'$, the indices of the set of nonredundant constraints. 
We first discuss correctness of the algorithm by induction. We claim that in every step $S \subseteq S'$ and $R \subseteq [n] \setminus S'$. This is trivially true in the beginning. Assume that in some step of the algorithm we have $S \subseteq S'$, $R \subseteq [n] \setminus S'$ and $r \in [n] \setminus (S \cup R)$. If $A_Sx \leq b_S$, $a_rx \geq b_r$ is not full-dimensional, then $A_{S'}x \leq b_{S'}$, $A_r x \geq b_r$ is not full-dimensional and hence $r$ is redundant by Observation \ref{obs_fulldim}. 

If $A_Sx \leq b_S$, $a_r x\geq b_r$ is full-dimensional and $x^*$ is an interior point, then we do ray shooting from $z^\epsilon$ to $x^*$. Note that $a_rx^* > b_r$, hence $x^*$ is not in the feasible region of $Ax \leq b$. Then the first constraint hit (with index $q$) is nonredundant. This constraint is unique, since $z^\epsilon$ is generic and $x^*$ is not a feasible solution. Denote the intersection point of the hyperplane given by $A_qx = b_q$ with the ray by $y^*$.  It follows that $q \notin S$ since $A_qy^* = b_q$ and we know that $A_Sy^* < b_S$. 
This proves correctness of the algorithm.

It remains to discuss the running time. %As mentioned finding $z$ takes time $O(d^2 n\log n)$. 
Since in every round we add either a variable to $S$ or $R$, the outer loop is executed $n$ times. In every round we run the Algorithm of Theorem \ref{thm_modHN} on at most $s$ inequalities. This takes time $O(nd^2 s \log s)$ by Theorem \ref{thm_modHN}. Moreover there are at most $s$ stages of ray shooting which takes $O(nds)$ time in total. The running time follows. 
\end{proof}

%\begin{corollary}
%For a system $Ax \leq b$ in $LI(2)$ we can test the dimension of the solution space in time $O(nd^2 \log n)$ and in the case of full-dimensionality output an interior point.
%\end{corollary}

\section{Revision of the Hochbaum-Naor Method}
\label{ch_HN}

Since we modify Hochbaum and Naor's Method in the next section, for completeness we review the basic components and the key ideas of the algorithm. %(Theorem \ref{thm_HN})\cite{HN}. 
\label{sec_HN}
\begin{theorem}
\label{thm_HN} \cite{HN}
For a system $Ax \leq b$ in $LI(2)$ one can decide  whether the problem is feasible in time $O(d^2 n \log n)$, and in the feasible case output a solution.
\end{theorem}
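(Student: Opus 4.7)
The plan is to combine Fourier--Motzkin elimination with the Aspvall--Shiloach feasibility oracle (which, given a fixed value $x_i=\lambda$, decides feasibility of a system in $LI(2)$ in $O(nd)$ time) by means of a median-based parametric search. Conceptually the algorithm eliminates variables one at a time, but it never materializes the eliminated system; the oracle is used to simulate the projected system along the way.

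First I would verify that $LI(2)$ is closed under Fourier--Motzkin elimination of a single variable. Eliminating $x_i$ pairs each upper-bound inequality $\alpha x_i+\beta x_j\le\gamma$ (with $\alpha>0$) with each lower-bound inequality $-\alpha' x_i+\beta' x_k\le\gamma'$ (with $\alpha'>0$) to produce an inequality whose only variables are $\{x_j,x_k\}$. So the projected system again lies in $LI(2)$, although the number of inequalities may blow up to $\Theta(n^2)$ after one step, which is why we must avoid materializing it.

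Second, I would reduce to the following one-dimensional task: compute the feasible range $[x_1^{\min},x_1^{\max}]$ of some chosen variable $x_1$, or certify infeasibility. The projection of the feasible polyhedron onto the $x_1$-axis is an interval whose endpoint is one of a set of candidate breakpoints determined implicitly by the coefficients of the constraints involving $x_1$ and its neighbors. Using a weighted-median selection over these (implicitly represented) breakpoints and invoking the oracle at the chosen value, each query costs $O(nd)$ and discards a constant fraction of the remaining candidates, giving an $O(nd\log n)$ bound per variable. Once a feasible $\lambda\in[x_1^{\min},x_1^{\max}]$ is found, substitute $x_1=\lambda$ into the original system, obtaining $(Ax\le b)|_{x_1=\lambda}$, again in $LI(2)$, with one fewer variable and at most $n$ inequalities, and recurse. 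After $d$ rounds the accumulated fixed values form a feasible point, and the total cost is $O(d\cdot nd\log n)=O(d^2n\log n)$; infeasibility at any stage is exposed by the oracle returning ``infeasible'' on every queried value.

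The hard part will be organizing the median search so that a single variable elimination truly costs $O(nd\log n)$ rather than $\Omega(n^2)$. Concretely, one has to argue that the breakpoints of the projected upper and lower envelopes of $x_1$ (as piecewise-linear functions of its neighbors) can be ranked and medianed without enumerating all $\Theta(n^2)$ candidate pairs, and that a single feasibility query on the median value is enough information to prune a constant fraction of them. This is the step that genuinely relies on the $LI(2)$ graph structure together with the Aspvall--Shiloach oracle as a black box; the remaining bookkeeping of substitution and recursion is routine.
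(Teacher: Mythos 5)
Your plan has a genuine gap at its core step. You reduce each round to ``compute the feasible range $[x_1^{\min},x_1^{\max}]$ of $x_1$, find a feasible $\lambda$ in it, substitute, recurse,'' and you justify the search by claiming the range endpoints are among breakpoints ``determined implicitly by the coefficients of the constraints involving $x_1$ and its neighbors.'' That claim is false: in $LI(2)$ the endpoints of the range of $x_1$ are in general determined by long chains and cycles through the whole constraint graph (e.g.\ $x_1\le x_2+1$, $x_2\le x_3$, $x_3\le 0$ forces $x_1^{\max}=1$, which is not a breakpoint of any two-dimensional envelope involving $x_1$). Consequently a (weighted-)median search over the locally defined breakpoints need not ever hit a value inside the range, and since the candidate values are real numbers rather than a discrete set containing the endpoints, binary search with the oracle of Theorem~\ref{proc_AS} can localize the range inside a gap between consecutive breakpoints but cannot terminate with a concrete feasible $\lambda$. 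This is exactly the case your substitution-only recursion cannot handle, and it is where the Hochbaum--Naor argument does something different: if some envelope breakpoint lies in the range, substitute it; otherwise the whole range lies strictly between two consecutive breakpoints $b_\ell,b_{\ell+1}$, and on that interval each neighbor of $x_i$ contributes at most two relevant inequalities, so one genuine Fourier--Motzkin elimination step (on the pruned system plus $x_i\ge b_\ell$, $x_i\le b_{\ell+1}$) adds only $O(1)$ inequalities per pair of neighbors. This dichotomy, together with keeping a separate $O(n)$-size system $H^i$ on which the Aspvall--Shiloach procedure is run (the eliminated system $G^i$ may grow to $\Theta(n+d^2)$), is what kills the quadratic blowup you correctly identify as ``the hard part'' but do not resolve; also note the breakpoints of the local envelopes number only $O(n+d)$, so the difficulty is not their enumeration but the fact that the range may avoid all of them.

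A second, smaller but real, gap is your treatment of infeasibility. The Aspvall--Shiloach procedure is not a feasibility oracle for the system with $x_i=\lambda$: its answers are only guaranteed correct when the system is feasible, and the paper gives an explicit infeasible example where the computed bounds $x^{\text{low}},x^{\text{high}}$ reveal nothing. So ``infeasibility at any stage is exposed by the oracle returning infeasible on every queried value'' does not hold; the correct argument is that either a contradiction $x_i^{\max}<x_i^{\min}$ is detected along the way, or else, since both substitution and Fourier--Motzkin elimination preserve infeasibility, the final one-variable system is infeasible and this is detected trivially. Without these two ingredients the proposed $O(d^2 n\log n)$ bound and the correctness of the recursion are not established.
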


In Section \ref{sec_ingredients} we will give all relevant tools to prove the theorem. We then discuss the feasible case in Sections \ref{sec_feasible} and \ref{sec_feasibleproof}, and finally the infeasible case in Section \ref{sec_infeasible}

\subsection{The Ingredients}
\label{sec_ingredients}
The Hochbaum-Naor Theorem is a mix of an efficient implementation of the Fourier-Motzkin method and the result by Aspvall and Shiloach described below \cite{Aspvall}. In general the Fourier-Motzkin method may generate an exponential number of inequalities, however in the $LI(2)$ case one can implement it efficiently.

\paragraph{The Fourier-Motzkin Method (for general LPs)} (For more details please refer to \cite[pp. 155 - 156]{Schrijver}). Let $G$ be a set of inequalities on the variables $x_1, \dots, x_d$. The Fourier-Motzkin Method eliminates the variables one by one to obtain a feasible solution or a certificate of infeasibility. At step $i$, the LP only contains variables $x_i, \dots, x_n$. Let us denote this system by $G_i$. To eliminate variable $x_i$ all inequalities that contain $x_i$ are written as $x_i \leq h$ or $x_i \geq \ell$, where $h$ and $\ell$ are some linear functions in $x_{i+1}, \dots, x_d$.  Let us denote the two families of inequalities obtained by $H$ and $L$, respectively. For each $h \in H$ and each $\ell \in L$ we add a new inequality $\ell \leq h$. This yields a set of inequalities $G_{i+1}$, on the variables $x_{i+1}, \dots , x_n$. The method is feasibility preserving and given a solution to $G_{i+1}$, one can construct a solution of $G_i$ in time $O(|G_i|)$.

%From now on we assume again $Ax \leq b$ in $LI(2)$. Since we will need $x^{\min}$ and $x^{\max}$ of different sets of inequalities, we will henceforth if needed denote it uniquely by $x^{\min}(G)$ and $x^{\max}(G)$ for the corresponding set of inequalities $G$. %In the beginning $G$ is a graph on the vertex set $x_0, x_1, \dots, x_d$, where $x_0$ the vertex used to connect inequalities with only one variable.

\paragraph{The Aspvall-Shiloach Method \cite{Aspvall}}
Hochbaum and Naor's algorithm highly relies on a the following result by Aspvall and Shiloach \cite{Aspvall}. %In this paragraph we will review their result. 

For $Ax \leq b \in LI(2)$ let $g$, $h \in G$ be of form $g: \alpha_1 x + \beta_1y \leq \gamma_1$, $h: \alpha_2 y + \beta_2z \leq \gamma_2$, for $\alpha_1, \alpha_2, \beta_1 \neq 0$, i.e., $g$ and $h$ share a variable. If $\beta_1 > 0$ and $\alpha_2 <0$, one can \emph{update $g$ with $h$} to get a bound on $x$ in terms of $z$ as follows:

Assume that $\alpha_1 >0$, it follows that $x \leq \frac{\gamma_1 - \beta_1y}{\alpha_1}$ and $y \geq \frac{\gamma_2 - \beta_2 z}{\alpha_2}$, and hence
\[x \leq \frac{\gamma_1 - \beta_1y}{\alpha_1} \leq \frac{\gamma_1 - \beta_1\frac{\gamma_2 - \beta_2 z}{\alpha_2}}{\alpha_1}.\]

In the case that $\alpha_1 < 0$ we get a lower bound on $x$ in terms of $z$. 
If  $\beta_1 > 0$ and $\alpha_2 <0$ one can similarly update $x$. If the the signs of $\beta_1$ and $\alpha_2$ are the same, then no update on $x$ is possible. 

For a family of $\{g_i: \alpha_i x_i + \beta_i y_i \leq \gamma_ i\} \subseteq G$, $i = 1, \dots, k$ ($k \geq 1$) we can do a sequence of updates to the bound on $x_i$, iff for all $i = 1, \dots, k-1$, $y_i = x_{i+1}$ and $\text{sign}(\beta_i) = -\text{sign}(\alpha_{i+1}) \neq 0$. If $\beta_k = 0$ this is called a \emph{chain} of length $k$. If $\beta_k \neq 0$ and $y_k = x_1$, this is called \emph{cycle} of length $k$. A chain or a cycle is called \emph{simple}, if every inequality appears at most once. (This concept was first introduced in \cite{Shostak}).

For example $g_1: x - \frac{1}{3}y \leq 1$, $g_2: y - \frac{1}{4}z \leq 1$, $g_3: z \leq 1$ defines a chain of length $3$ as follows:
\[x \leq 1 + \frac{1}{3}y \leq 1 + \frac{1}{3}(1 + \frac{1}{4}z) \leq 1 + \frac{1}{3}(1 + \frac{1}{12}) = \frac{17}{12}.\]

Similarly $g_1: x - \frac{1}{3}y \geq \frac{1}{12}$, $g_2: y - \frac{1}{4}z \geq 0$, $g_3: z - \frac{1}{3}x \geq 0$ defines a cycle of length $3$, 
\[x \geq \frac{1}{12} + \frac{1}{3}y \geq \frac{1}{12} + \frac{1}{3}\cdot\frac{1}{4}z \geq \frac{1}{12} + \frac{1}{3} \cdot \frac{1}{4} \cdot \frac{1}{3} x, \]
which implies $x \geq \frac{3}{35}$.
A cycle results in an inequality of form $x \leq c + ax$ (or $x \geq c + ax$). If  $a=1$, $c <0$ (or $a =1$, $c>0$, respectively), then this is an infeasibility certificate for $Ax \leq b$.

For a variable $x_i$ we denote by $x_i^\text{low}$ ($x_i^\text{high}$), the best lower (upper) bound that can be obtained by considering all chains and cycles of length at most $d$. If the system $Ax \leq b$ is feasible, one can show that $[x_i^\text{low}, x_i^\text{high}] = [x_i^\text{min}, x_i^\text{max}] $. 
Recall that the interval $[x_i^\text{min}, x_i^\text{max}] $ denotes the range of $x_i$, $x_i^\text{min}$ ($x_i^\text{max}$) is the smallest (largest) value that $x_i$ can take such that the system still has a feasible solution.
The direction $[x_i^\text{min}, x_i^\text{max}]  \subseteq [x_i^\text{low}, x_i^\text{high}] $ follows immediately from the definitions. For the other direction let $p$ be a point of $P = \{x \mid Ax \leq b\}$ such that $p_i = x_i^\text{min}$. Looking at the simple chains and simple cycles of the halfspaces that contain $p$ on their boundary, gives us the lower bound $x_i^\text{min}$. The case for $x_i^\text{high}$ and $x_i^\text{max}$ is equivalent.

 On the other hand if the system is infeasible then two things can happen. If $x_i^\text{high} < x_i^\text{low}$, then the range of $x_i$ is empty and this is a certificate for infeasibility. Such a certificate may not exist in general. This is the case for example if the linear system consists of two independent subsystems, one feasible and one infeasible. Also for the small (infeasible) example 
$y+z \leq -1$, 
$y+z \geq 1$, 
$x+y \leq 1 $,
we have that
$x^\text{low} = -\infty$ and $x^\text{high} = \infty$.

From now on we will only discuss the case where $Ax \leq b$ is feasible. In Section \ref{sec_infeasible} we will discuss how to get an infeasibility certificate using the same algorithm. 
\begin{theorem} \cite{Aspvall} \label{proc_AS} 
Given a feasible system $Ax \leq b$ in $LI(2)$, a variable $x_i$ and a value $\lambda \in R$, one can decide in time $O(nd)$ whether
	$\lambda < x_i^\text{min}$,
	$\lambda = x_i^\text{min}$,
	$\lambda \in (x_i^\text{min}, x_i^\text{max})$,
	$\lambda = x_i^\text{max}$ or
	$\lambda > x_i^\text{max}$.
\end{theorem}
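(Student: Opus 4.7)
The plan is to reduce the five-way classification to three feasibility tests on augmented $LI(2)$ systems, each solvable in $O(nd)$ time by a Bellman-Ford-style propagation along chains and cycles.

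First, the reduction. Since $\{x \mid Ax \leq b\}$ is closed, $\lambda < x_i^{\min}$ iff $G \cup \{x_i \leq \lambda\}$ is infeasible; $\lambda = x_i^{\min}$ iff $G \cup \{x_i \leq \lambda\}$ is feasible while $G \cup \{x_i < \lambda\}$ is not; and symmetrically at the upper end using $G \cup \{x_i \geq \lambda\}$ and $G \cup \{x_i > \lambda\}$. Thus three feasibility tests — $G \cup \{x_i \leq \lambda\}$, $G \cup \{x_i \geq \lambda\}$, and one test refining whether these are attained with equality only — distinguish all five cases. Strict-vs-non-strict can be handled either by carrying a symbolic infinitesimal $\epsilon > 0$ through the bounds (so every scalar becomes an affine expression in $\epsilon$ and comparisons remain $O(1)$), or by tagging each propagated bound with a strict/non-strict flag that composes naturally along a chain.

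Second, each feasibility test. Fix an augmented system, say $G \cup \{x_i \leq \lambda\}$. Initialize $L_j = -\infty$, $U_j = +\infty$ for all $j$, except $U_i = \lambda$. In each relaxation round, scan all $n$ inequalities; for an inequality $\alpha x_j + \beta x_k \leq \gamma$ whose signs match one of the Aspvall-Shiloach update patterns recalled in Section~\ref{sec_ingredients}, use the current bound on one variable to tighten the bound on the other. Since any simple chain or simple cycle involves at most $d$ distinct variables, after $d$ rounds the recorded $(L_j, U_j)$ equal the tightest bounds $(x_j^{\text{low}}, x_j^{\text{high}})$ obtainable from chains and cycles of length $\leq d$ in the augmented system. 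By the characterization recalled just before the theorem, for a feasible $LI(2)$ system the chain/cycle bounds coincide with the range $[x_j^{\min}, x_j^{\max}]$; consequently the augmented system is infeasible iff either some $L_j > U_j$ results, or a cycle collapses to $x_j \leq c + x_j$ with $c < 0$ (i.e., $c - \epsilon < 0$ in the strict variant). Otherwise the converged bounds are consistent and a feasible point is reconstructed by back-substituting through the chains, exactly as in the Fourier-Motzkin recovery step described in Section~\ref{sec_ingredients}.

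The bookkeeping per round is $O(n)$ arithmetic operations, and there are at most $d$ rounds, so each feasibility test costs $O(nd)$; three tests yield the claimed $O(nd)$ bound overall. The main obstacle is the cycle case together with the strict/non-strict refinement: chain bounds stabilize monotonically, but cycles of length up to $d$ require the full $d$ rounds to reveal the multiplier and the residue in $x_j \leq c + a x_j$, and this interacts subtly with the infinitesimal perturbation needed to separate $\lambda = x_i^{\min}$ from $\lambda \in (x_i^{\min}, x_i^{\max})$. Once the propagation is designed to carry the strict/non-strict tag correctly along chains and to detect the critical $a = 1$, $c \leq 0$ cycle condition, the classification among the five cases is read off directly from the outcomes of the three tests.
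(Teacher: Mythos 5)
Your reduction of the five-way classification to strict/non-strict feasibility tests of augmented systems such as $G \cup \{x_i \leq \lambda\}$ is fine in principle (ranges are closed, since projections of polyhedra are polyhedra), but the $O(nd)$ feasibility test you build it on is where the argument breaks. Seeding only $U_i = \lambda$ (all other bounds $\pm\infty$) and relaxing for $d$ rounds does \emph{not} produce ``the tightest bounds obtainable from chains and cycles of length $\leq d$'': a cycle yields an inequality $x_j \geq c + a x_j$, whose resolved bound $c/(1-a)$ (for $a<1$) is a fixpoint that round-based relaxation started from $-\infty$ never attains --- traversing the cycle with $L_j=-\infty$ just returns $-\infty$. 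Concretely, take the paper's own cycle $x - \frac{1}{3}y \geq \frac{1}{12}$, $y - \frac{1}{4}z \geq 0$, $z - \frac{1}{3}x \geq 0$, which is feasible with $x^{\min} = \frac{3}{35}$, and add $x \leq \lambda$ with $\lambda < \frac{3}{35}$. In your propagation every lower bound stays $-\infty$ (a finite lower bound on any of $x,y,z$ would require a finite lower bound on another), the upper bounds stay finite, no contradiction $L_j > U_j$ ever appears, and no cycle has multiplier $a=1$; so your criterion declares the augmented system feasible although it is infeasible, and $\lambda < x_i^{\min}$ is misclassified. The stated rule ``infeasible iff some $L_j > U_j$ or an $a=1$, $c<0$ cycle'' is therefore false; the $a=1$ cycles you single out are certificates of infeasibility of $G$ itself (excluded here by assumption), not the mechanism by which $\lambda < x_i^{\min}$ manifests.

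The missing idea --- and the heart of the Aspvall--Shiloach procedure as summarized in Section~\ref{sec_ingredients} --- is to pin $x_i$ on \emph{both} sides, $\underline{x}_i = \overline{x}_i = \lambda$, before propagating. Then every cycle through $x_i$ returns a concrete number $c + a\lambda$, and the consistency test $\underline{x}_i(\lambda) \leq \lambda \leq \overline{x}_i(\lambda)$ resolves the fixpoint implicitly (for $a<1$, $\lambda \geq c + a\lambda$ iff $\lambda \geq c/(1-a)$) without ever computing $c/(1-a)$; in the example above one gets $\underline{x}(\lambda) \geq \frac{1}{12} + \frac{\lambda}{36} > \lambda$ exactly when $\lambda < \frac{3}{35}$. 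One also needs the nontrivial converse (no violation after $d$ rounds implies $\lambda \in [x_i^{\min}, x_i^{\max}]$), which rests on the chain/cycle characterization of the range for feasible systems rather than on Bellman--Ford convergence. Finally, for the boundary cases the paper separates $\lambda = x_i^{\min}$ from the open range via the left/right slopes of the convex function $\underline{x}_i(\lambda)$ and the concave $\overline{x}_i(\lambda)$ at $\lambda$; your symbolic-$\epsilon$ strict tests could in principle play that role, but only after the underlying feasibility test is repaired as above.
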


Therefore in the feasible case this algorithm decides whether $\lambda$ lies in the open  range $(x_i^\text{min}, x_i^\text{max})$, on boundary of the range, or outside of the range (and on which side). %In the infeasible case the infeasibility might not be detected right away as mentioned above. We will see later that the Hochbaum-Naor procedure detects it eventually.
%   Note that the original Theorem of Aspvall and Shiloach decides for given $\lambda$ whether $\lambda < x_i^\text{low}$, $\lambda = x_i^\text{low}$, $\lambda \in (x_i^\text{low}, x_i^\text{high})$, $\lambda = x_i^\text{high}$, $\lambda > x_i^\text{high}$ (or infeasibility), where $x_i^\text{low}$ is the best lower (upper) bound on $x_i$ you get by considering all chains and cycles of length at most $d$, (for more detail please refer to \cite{Aspvall}). 

The proof of Theorem \ref{proc_AS} requires many technical details. In the following we will summarize the method and provide an intuitive idea. For detailed proofs refer to \cite{Aspvall, HN}.

Let $x_i$ be a fixed variable and $\lambda \in R$. For all $j \in [n]$ let us denote by lo$(x_j)$ (up$(x_j)$) the trivial lower (upper) bound on $x_j$ given by $G$, which may also be infinite. %which exist by the boundedness assumption.

The following algorithm fixes $x_i=\lambda$ and returns upper and lower bounds on $x_i$ accordingly. In $d$ rounds, it updates the lower and upper bounds, denoted by $\underline{x}_j$ and $\overline{x}_j$, respectively, on all variables, where initially we are given $\underline{x}_j := \text{lo}(x_j)$ and $\overline{x}_j := \text{up}(x_j)$.
\medskip
\noindent
\begin{tabbing}
123 \= 123 \= 123 \= 123 \= 123 \= 123 \= 123 \= 123 \= \kill
\> Algorithm Aspvall-Shiloach ($G$, $i$, $\lambda$);\\
\> {\bf begin }\\
\> \> \textbf{for} $j=1, \dots, d$ \textbf{do}\\
\> \> \>  $\underline{x}_j := \text{lo}(x_j), \overline{x}_j := \text{up}(x_j)$;\\
\> \> \textbf{endfor};\\
\> \> $\underline{x}_i := \max\{\underline{x}_i,\lambda\}$, $\overline{x}_i := \min\{\overline{x}_i,\lambda\}$ ;\\
\> \> {\bf for} $j = 1, \dots, d$ {\bf do}\\
\> \> \> {\bf for} $g \in G$, with $g: ax_j + bx_k \leq c$, $a,b \neq 0$ {\bf do} \\
\> \> \> \> \textbf{if} $a >0, b>0$ \textbf{then} \\
\> \> \> \> \> $x_j \leq \frac{c - b\underline{x}_k}{a}=: \overline{x}_j^g$, $x_k \leq \frac{c - a\underline{x}_j}{b}=: \overline{x}_k^g$ ;\\
\> \> \> \> \textbf{elseif} $a >0, b<0$ \textbf{then} \\
\> \> \> \> \> $x_j \leq \frac{c - b\overline{x}_k}{a}=: \overline{x}_j^g$, $x_k \geq \frac{c - a\overline{x}_j}{b}=: \underline{x}_k^g$; \\
\> \> \> \> \textbf{elseif} $a <0, b>0$ \textbf{then} \\
\> \> \> \> \> $x_j \geq \frac{c - b\overline{x}_k}{a}=: \underline{x}_j^g$, $x_k \leq \frac{c - a\overline{x}_j}{b}=: \overline{x}_k^g$ ;\\
\> \> \> \> \textbf{else} /*$a <0, b<0$*/ \textbf{then} \\
\> \> \> \> \> $x_j \geq \frac{c - b\underline{x}_k}{a}=: \underline{x}_j^g$, $x_k \geq \frac{c - a\underline{x}_j}{b}=: \underline{x}_k^g$ ;\\
\> \> \> {\bf end for};\\
\> \> \> {\bf for} $\ell = 1, \dots, d$ {\bf do}\\
%\> \> \> \> $\underline{x}_g:= \max_g\{\underline{x}_j^g\}$, $\overline{x}_g:= \min_g\{\overline{x}_j^g\}$;\\
\> \> \> \> $\underline{x}_\ell:= \max_g\{\underline{x}_\ell, \underline{x}_\ell^g\}$, $\overline{x}_\ell:= \min_g\{\overline{x}_\ell,  \overline{x}_\ell^g\}$;\\
\> \> \> {\bf endfor};\\
\> \> {\bf endfor};\\
\> \> output $\underline{x}_i, \overline{x}_i$; \\
\> {\bf end}.\\       
\end{tabbing}
The algorithm runs in $d$ rounds of $O(n)$ steps each, which results in the running time of $O(nd)$.

\begin{figure*}[h]
\begin{center}
\includegraphics{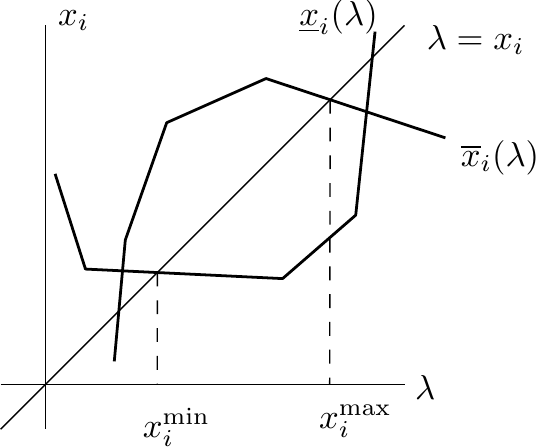}
\end{center}
\caption{Output Aspvall-Shiloach} \label{fig_AS}
\end{figure*}

Since the output of the algorithm depends on $\lambda$ we denote the outputs by $\underline{x}_i(\lambda)$ and $\overline{x}_i(\lambda)$. By the properties of the algorithm it follows that the function $\underline{x}_i(\lambda)$ is convex, whereas $\overline{x}_i(\lambda)$ is concave (see Figure \ref{fig_AS}).

One can show that $\lambda \in [x_i^\text{min}, x_i^\text{max}]$, if and only if $\underline{x}_i(\lambda) \leq \lambda \leq \overline{x}_i(\lambda)$. It follows that $x_i^\text{min} =\min\{\lambda \mid \underline{x}_i(\lambda) \leq \lambda \leq \overline{x}_i(\lambda)\}$ and $x_i^\text{max} =\max\{\lambda \mid \underline{x}_i(\lambda) \leq \lambda \leq \overline{x}_i(\lambda)\}$.  To distinguish between all cases of Theorem \ref{proc_AS}, we additionally need the (left and right) slopes of $\underline{x}_i(\lambda)$ and $\overline{x}_i(\lambda)$. It is not hard to modify the Aspvall-Shiloach Algorithm in such a way that it keeps track of the slopes as well. For instance if $ \underline{x}_i(\lambda) = \lambda \leq \overline{x}_i(\lambda)$ and the slope of $\underline{x}_i$ is smaller than one at $\lambda$, then $\lambda=x_i^\text{min}$. In the case where the slope is greater than one  $\lambda=x_i^\text{max}$.  
Using a careful case distinction one can show that the for given $\lambda$, the values of $\underline{x}_i(\lambda)$ and $\overline{x}_i(\lambda)$ and their (left and right) slopes at $\lambda$ are enough to decide all cases of Theorem \ref{proc_AS}.
%In general $[x_i^{\min}, x_i^{\max}] \subseteq [x^\text{low}, x^\text{high}]$, however in the feasible case one can show that $x_i^{\min} = x_i^{\text{low}}$ and $x_i^{\max} = x_i^{\text{high}}$. In the infeasible case this is not always true. Note for example the case in which the linear system consists of two independent subsystems, one feasible and one infeasible. Also for the small (infeasible) example 
%$y+z \leq -1$, 
%$y+z \geq 1$, 
%$x+y \leq 1 $,
%we have that
%$x^\text{low} = -\infty$ and $x^\text{high} = \infty$.

%It is therefore possible that using Theorem \ref{proc_AS}, infeasibility is not detected right away. However we will see that it will be detected at some point of the Hochbaum-Naor algorithm.
\subsection{The Algorithm for the Feasible Case}
\label{sec_feasible}
The rough idea of the algorithm is the following. At step $i$ we want to efficiently find $\lambda$ in the current range $[x_i^{\min}, x_i^{\max}]$ and set $x_i = \lambda$ to obtain a system with one less variable. Whenever this is not possible, we eliminate $x_i$ efficiently in a Fourier-Motzkin step. After this first part we set all variables that were eliminated to values in their current range in the normal Fourier-Motzkin backtracking step.

\paragraph{First Part} The first part of the algorithm runs in $d$ steps. In step $i$ we update two linear systems $G^{i+1}$ and $H^{i+1}$ from $G^{i}$ and $H^{i}$ respectively, where initially $G^1 = H^1 = G$. 
The systems $G^i$ (on $d-i+1$ variables) and $H^i$ (on $d$ variables) do basically encode the same solution system, we will see later why a distinction is necessary. During the execution of the algorithm, $G^i$ is used to do Fourier-Motzkin elimination method, $H^i$ is used to run the algorithm of Theorem \ref{proc_AS}. We denote by FM$(x_i, G)$ the set of inequalities obtained by eliminating $x_i$ from $G$ by using one step of the Fourier-Motzkin elimination method.

For any two variables $x_j$ and $x_k$ in $G$, with $j < k$ we represent the set of inequalities containing $x_j$ and $x_k$  (with nonzero coefficients) in the $(x_j, x_k)$ plane as two envelopes, the upper envelope and the lower envelope. The feasible region of $x_j$ and $x_k$ is contained between the envelopes (in the pink region of Figure \ref{fig_case1}) and each envelope can be represented as a piecewise linear function with breakpoints. The projection of the breakpoints onto the $j$-axis is denoted by $B^j_{k}(G)$. If the envelope is unbounded in the $x_j$-direction we add points at infinity to $B^j_{k}(G)$. The range of $x_j$ is hence contained in the interval given by the leftmost and rightmost point of $B^j_{k}(G)$.

\noindent Below follows the pseudo code and the explanation of the algorithm. 

\medskip
\noindent
\begin{tabbing}
123 \= 123 \= 123 \= 123 \= 123 \= 123 \= 123 \= 123 \= \kill
\> Algorithm Hochbaum-Naor ($G$);\\
\> {\bf begin }\\
\> \> $G^1 = H^1 =G$;\\
\> \> {\bf for} $i = 1, \dots, d-1$ {\bf do}\\
\> \> \> Generate $B^i = (b_1^i, \dots, b_m^i)$, the sorted sequence of the points in   \\
\> \> \> $\bigcup_{x_j \in N(x_i, G_i)} B^i_{j}(G^i)$; \\
\> \> \> Use Theorem \ref{proc_AS} to do binary search on $B^i$; \\
\> \> \> {\bf if}  $\exists \ell \in \{1,\dots, m\}$ such that $x_i^{\text{min}}(H^i) \leq b_\ell \leq x_i^{\text{max}}(H^i)$ {\bf then} \\
\> \> \> \> $G^{i+1} := G^i |_{x_i = b_\ell}$;\\
\> \> \> \> $H^{i+1} := H^i \cup \{x_i \leq b_\ell\} \cup \{x_i \geq b_\ell\}$;\\
\> \> \> {\bf else if}  $\exists \ell  (1 \leq \ell < m)$ s.t.\ $b_\ell< x_i^{\text{min}}(H^i)$ and $x_i^{\text{max}}(H^i) < b_{\ell +1}$ {\bf then}\\
\> \> \> \> $G^i := \text{rel}(G^i) \cup \{x_i \geq b_\ell\} \cup \{x_i \leq b_{\ell+1}\}$; \\
\> \> \> \> $G^{i+1} := \text{FM}(x_i,G^i)$; \\
\> \> \> \> $H^{i+1} := H^i$; \\
\> \> \>  {\bf else}  /* system infeasible */ {\bf then}\\
\> \> \> \>  output \emph{system infeasible}; \\
\> \> \> {\bf endif};\\
\> \> {\bf endfor}; \\
\> {\bf end}.\\       
\end{tabbing}

Here rel$(G^i)$ denotes the so called \emph{relevant} inequalities in $G_i$ which we obtain by removing some redundant inequalities. The exact definition follows in the description of the algorithm below.

In step $i$ the algorithm has computed $G^i$ and $H^i$, where originally $G^1 = H^1 = G$. For every pair $(x_i,x_j)$, such that $x_j$ is a neighbor of $x_i$ in $G^i$, i.e., $x_j \in N(x_i, G_i)$, it computes the projections of the breakpoints $B^i_j(G^i)$. The union of those points are sorted and denoted by $B^i$. The idea is now to run a binary search on $B^i$ using Theorem \ref{proc_AS}, in the hope of finding a point in the range of $x_i$. 

If the algorithm finds a breakpoint $b_\ell \in B^i$ such that $x_i^{\text{min}}(H^i) \leq b_\ell \leq x_i^{\text{max}}(H^i)$, 
then we set $x_i := b_\ell$ (see Figure \ref{fig_case1}). We set $H^{i+1} = H^i \cup \{x_i \leq b_\ell\} \cup \{x_i \geq b_\ell\}$ and $G^{i+1} := G^i |_{x_i = b_\ell}$. %The reason why we need this distinction will become clear in the second case. 

\begin{figure*}
\begin{center}
\includegraphics{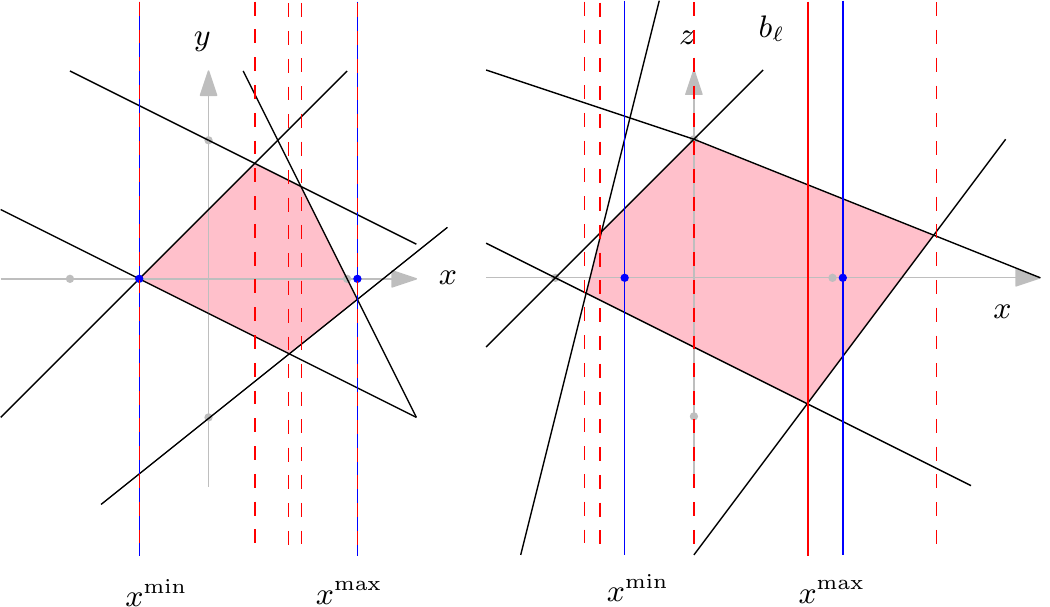}
\end{center}
\caption{First case Hochbaum-Naor, example with 3 variables} \label{fig_case1}
\end{figure*}

If there is no such $b_\ell$, the algorithm finds  $b_\ell$, $b_{\ell + 1}$ such that \ $b_\ell< x_i^{\text{min}}(H^i)$ and $x_i^{\text{max}}(H^i) < b_{\ell +1}$. In that case for any neighbor $x_j$ of  $x_i$ the number of inequalities containing both is reduced to at most two (bold lines of \ref{fig_case2}), the ones that define the upper and lower envelope respectively on the interval $[b_{\ell}, b_{\ell+1}]$ (blue part of Figure \ref{fig_case2}).  This can be done since $[ x_i^{\text{min}},  x_i^{\text{max}}] \subseteq [b_{\ell}, b_{\ell+1}]$ and therefore all other inequalities are redundant and can be removed. We denote the set of inequalities obtained after the removal of the redundant ones by rel$(G^i)$.
The normal Fourier-Motzkin elimination is applied on \text{rel}$(G^i) \cup \{x_i \geq b_\ell\} \cup \{x_i \leq b_{\ell+1}\}$ to obtain $G^{i+1}$. By the above discussion  \text{rel}$(G^i) \cup \{x_i \geq b_\ell\} \cup \{x_i \leq b_{\ell+1}\}$  has the same solution space as $G^i$. As the number of inequalities between $x^i$ and any neighbor $x^j$ is reduced to at most two, the algorithm adds at most four inequalities between any pair of neighbors of $x^i$. This prevents the usual quadratic blowup of the Fourier-Motzkin Method. The system $H^{i+1}$ does not need to be updated, i.e., $H^{i+1} = H^i$. 

\begin{figure*}
\begin{center}
\includegraphics{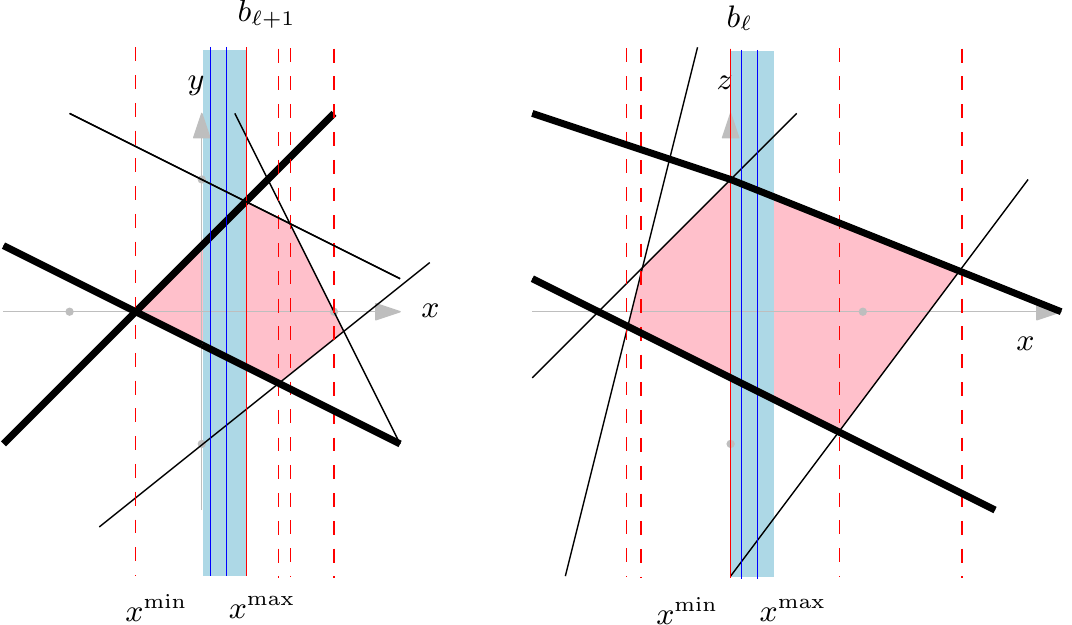}
\end{center}
\caption{Second case Hochbaum-Naor, example with 3 variables} \label{fig_case2}
\end{figure*}

We observe that in every step only a constant number of constraints are added to $H^i$, which guarantees the running time of the binary search to be $O(nd\log n)$. The size of $G^i$ can be of order $\Theta(n + d^2)$ (as we may add up to $4d$ constraints in each step), hence running Theorem \ref{proc_AS} on $G^i$ would not guarantee the running time in case where $n = o(d^2)$.

\paragraph{Second Part} The second part of the algorithm is now the normal backtracking of the Fourier-Motzkin Method. Assume that the variables that were eliminated (in the elseif-step) are $x_{i1}, \dots, x_{ik}$, where $k \leq d$ and $i1 < i2 < \dots < ik$. In the end of part one we are left with the system $G^d$ on variable $x_d$.  By the properties of the Fourier-Motzkin elimination $G^d$ is feasible and the range of $x_d$ is the same as its range in $H^d$. Now choose a feasible value of $x_{d}$ and continue inductively by backtracking through $G^{ik}, \dots , G^{i1}$. %Since $|G^{i}| = O(n + d^2)$, the backtracking can be done in time $O(nd + d^3)$. %If the final system is infeasible, output infeasibility.
The geometric interpretation is similar to the first part: for each variable we pick a value in its current range. 

\subsection{Discussion of the Algorithm}
\label{sec_feasibleproof}
We briefly discuss the main points of the proof of Theorem \ref{thm_HN} (for more detail see \cite{HN}).
\begin{proof}[Proof sketch of Theorem \ref{thm_HN} for the feasible case]
Building and updating all enve-lopes takes $O(dn \log n)$ time per step, hence $O(d^2n \log n)$ in total.   Since for all $i$, the size of $H^i$ is $O(n)$ and $|B^i| \leq n + 4d$ ,
in each step the binary search runs Theorem \ref{proc_AS} $O(\log n)$ times, where each evaluation takes time $O(dn)$. It follows that the first part of the algorithm takes time $O( d^2 n \log n)$. For the second part consider the step where we find a solution for a variable $x_j$ in the backtracking step in $G_i$. Then $x_j$ shares at most two inequalities with each of its neighbors, therefore the whole second part only takes time $O(d^2)$. %and the second part only takes $O(nd + d^3)$ time, the total running time is therefore $O(d^2  n\log n)$ (for more details see \cite{HN}). 

During the whole algorithm, the variable $x_i$ is set to some value $\lambda$ if and only if $\lambda$ is in the current bounds $[x_i^{\min}, x_i^{\max}]$. Therefore in the feasible case, it correctly outputs a feasible point of $Ax \leq b$. %As already mentioned in the infeasible case the algorithm might not detect infeasibility right away. Suppose therefore that $Ax \leq b$ is infeasible and infeasiblity is not detected in the $d-1$ rounds of part one. Then $G^d$ is an infeasible system on the variable $x_d$ with $O(n + d^2)$ constraints, hence infeasibility can be detected in time $O(n + d^2)$. %However if the algorithm runs the Fourier-Motzkin step at least once, then infeasibility will be detected during the Fourier-Motzkin backtracking step. This follows since Fourier-Motzkin preserves feasibility. If the algorithm always runs the first if-loop then $H^{d}$ is equivalent to an infeasible system in one variable. In this case Theorem \ref{proc_AS} always detects infeasibility. 
%It follows that the Hochbaum-Naor Theorem always detects infeasibility.
\end{proof}

\subsection{Discussion of the Algorithm in the Infeasible Case}
\label{sec_infeasible}
In the previous Section we showed that if $Ax \leq b$ is feasible, then the Hochbaum-Naor Method always correctly outputs a feasible point. We now show that in the infeasible case, infeasibility is always detected, which completes the proof of Theorem \ref{thm_HN}.

\begin{proof}[Proof sketch of Theorem \ref{thm_HN} for the infeasible case]
Assume that $Ax \leq b$ is infeasible. We now run the first part of the algorithm as in the feasible case. If during the execution at some point during binary search, we detect a contradiction in form of $x_i^{\max} < x_i^{\min}$ this is a certificate for infeasibility and we are done. It is however possible that in every call of the algorithm of Theorem \ref{proc_AS} we get some (wrong) output 	$\lambda < x_i^\text{min}$,
	%$\lambda = x_i^\text{min}$,
	$\lambda \in [x_i^\text{min}, x_i^\text{max}]$,
	%$\lambda = x_i^\text{max}$ or
	$\lambda > x_i^\text{max}$. 
In that case $G^d$ is an infeasible system on the variable $d$. This follows since the Fourier-Motzkin elimination is feasibility preserving and setting some variables to fixed values in an infeasible system, keeps the system infeasible. Detecting infeasibility in a system with one variable can be trivially done in linear time in the number of constraints. It follows that infeasibility is always detected, which concludes the proof of Theorem \ref{thm_HN}.
\end{proof}

\section{Modification of the Hochbaum-Naor Method}
\label{sec_HNmod}
We now show how to modify the Hochbaum-Naor Method from Section \ref{sec_HN}, such that it decides full-dimensio-nality of the problem and in the full-dimensional case outputs an interior point (see Theorem \ref{thm_modHN}).
For this we need some preparatory lemmas.
\begin{lemma}
\label{lemma_fullHN}
Let $Ax \leq b$ feasible and let $\lambda \in (x_1^{\min}, x_1^{\max})$. Then $y:= (\lambda, y_2, \dots, y_d)$ is an interior point solution of $Ax \leq b$ if and only if $y':=(y_2, \dots, y_d)$ is an interior point solution of $A'x \leq b'$, where $A'x \leq b' $ is the system obtained by $(Ax \leq b) |_{x_1 = \lambda}$. %The linear constraints $A'x \leq b'$ are the constraints on $x_2, \dots x_d$ that one obtains by substituting $x_1 = \lambda$.
\end{lemma}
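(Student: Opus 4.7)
The plan is a direct constraint-by-constraint substitution argument. Partition the inequalities of $Ax \leq b$ according to their dependence on $x_1$: Type A consists of those not involving $x_1$; Type B consists of those of the form $\alpha x_1 \leq \gamma$ with $\alpha \neq 0$; and Type C consists of those of the form $\alpha x_1 + \beta x_j \leq \gamma$ with $\alpha, \beta \neq 0$ and $j \geq 2$. Under the substitution $x_1 = \lambda$, Type A constraints are unchanged, Type B constraints become the constants $0 \leq \gamma - \alpha\lambda$, and Type C constraints become $\beta x_j \leq \gamma - \alpha\lambda$; together these make up $A'x \leq b'$.

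For each type I check that strictness at $y = (\lambda, y_2, \dots, y_d)$ in $Ax \leq b$ is equivalent to strictness at $y' = (y_2, \dots, y_d)$ in $A'x \leq b'$. Type A is immediate because the constraint does not involve $x_1$ at all. Type C follows by elementary arithmetic: $\alpha \lambda + \beta y_j < \gamma$ is equivalent to $\beta y_j < \gamma - \alpha \lambda$.

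The only subtle case is Type B. In the forward direction, if $y$ is an interior point of $Ax \leq b$ then each Type B constraint is strict at $y$, giving $\alpha \lambda < \gamma$, so the transformed inequality $0 < \gamma - \alpha \lambda$ holds strictly at $y'$. In the backward direction one must guarantee the original Type B constraints are strict at $x_1 = \lambda$, and this is exactly where the hypothesis $\lambda \in (x_1^{\min}, x_1^{\max})$ enters. Indeed, such a constraint induces a one-sided bound on $x_1$ (an upper bound $\gamma/\alpha$ if $\alpha > 0$, a lower bound if $\alpha < 0$), which must satisfy $\gamma/\alpha \geq x_1^{\max}$ or $\gamma/\alpha \leq x_1^{\min}$ respectively; the strict containment $\lambda \in (x_1^{\min}, x_1^{\max})$ then forces $\alpha \lambda < \gamma$ in either case.

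There is essentially no serious obstacle, the proof being a bookkeeping exercise over the three constraint types. The only point requiring attention is the backward direction for Type B, where the open-interval assumption on $\lambda$ is precisely what ensures the strict satisfaction of the pure-$x_1$ inequalities; this is also the single place where the hypothesis is actually used, since in the forward direction the strictness is inherited from $y$ being an interior point.
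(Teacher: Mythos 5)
Your proof is correct and follows essentially the same route as the paper's: the forward direction is immediate, and the backward direction reduces to the constraints involving only $x_1$, where the open-interval hypothesis $\lambda \in (x_1^{\min}, x_1^{\max})$ forces strictness (your Type B case spells out explicitly the step the paper states as "a contradiction to $\lambda \in (x_1^{\min}, x_1^{\max})$").
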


\begin{proof}
Let $y$ be an interior point solution of $Ax \leq b$. Then by definition $Ay < b$ and obviously $A'y' < b'$. On the other hand let $y'$ be an interior point solution of  $A'x < b'$, i.e., $A'y' <b'$. Then $y$ satisfies any inequalities containing some $x_i \neq x_1$ strictly. The only inequalities that might be satisfied with equality are the ones containing only $x_1$, but this is a contradiction to $\lambda \in (x_1^{\min}, x_1^{\max})$.
\end{proof}

\begin{lemma}
\label{lemma_fullFM}
%In the Fourier-Motzkin Method the linear program $G_{i+1}$ has a feasible interior point if and only if the linear program $G_i$ has a feasible interior point. 
In the Fourier-Motzkin Method $G_{i+1}$ has an interior point solution, if $G_i$ has one. 
Moreover if an interior point solution of $G_1 = G$ exists, it can be obtained in the running time of the Fourier-Motzkin algorithm.
\end{lemma}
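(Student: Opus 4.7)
The plan is to handle the two claims sequentially, relying on a single forward step of Fourier-Motzkin for the first and on a modified backtracking for the second. For the first claim, I would take an interior point $x^* = (x_i^*, x_{i+1}^*, \ldots, x_d^*)$ of $G_i$, meaning every inequality of $G_i$ is strictly satisfied at $x^*$, and show that $\bar x^* := (x_{i+1}^*, \ldots, x_d^*)$ is an interior point of $G_{i+1}$. The inequalities of $G_{i+1}$ split into two types: (a) inequalities of $G_i$ not involving $x_i$, which are inherited unchanged and hence still strict at $\bar x^*$; (b) combined inequalities $\ell(\bar x) \leq h(\bar x)$ obtained from a lower bound $\ell(\bar x) \leq x_i$ and an upper bound $x_i \leq h(\bar x)$ in $G_i$. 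For (b), the interior-point hypothesis gives $\ell(\bar x^*) < x_i^* < h(\bar x^*)$, so $\ell(\bar x^*) < h(\bar x^*)$ strictly, which is exactly what is needed.

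For the second claim, I iterate the first claim to conclude that $G_d$, a system in the single variable $x_d$, inherits an interior point from $G_1$; its feasible set is therefore a non-degenerate open interval, and I pick $x_d^*$ strictly inside it. Then I backtrack: for $i = d-1, d-2, \ldots, 1$, given an already constructed interior point $(x_{i+1}^*, \ldots, x_d^*)$ of $G_{i+1}$, let $L$ be the maximum of $\ell(x_{i+1}^*, \ldots, x_d^*)$ taken over the lower bounds on $x_i$ in $G_i$, and $H$ the minimum of $h(x_{i+1}^*, \ldots, x_d^*)$ taken over the upper bounds on $x_i$ in $G_i$ (with the convention $L = -\infty$ or $H = +\infty$ for missing sides). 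Each combined inequality $\ell \leq h$ belongs to $G_{i+1}$ and is strictly satisfied at the current point, so $L < H$; choosing $x_i^* \in (L, H)$, for instance the midpoint (or a unit shift from a finite side if one side is infinite), extends the interior point to $G_i$.

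For the running time, the forward pass is exactly the Fourier-Motzkin algorithm itself. The backtracking at step $i$ examines only the inequalities of $G_i$ involving $x_i$, at cost $O(|G_i|)$; summing over $i$ matches the standard Fourier-Motzkin backtracking cost, which is already absorbed in the running time of the algorithm. The only real subtlety, rather than a genuine obstacle, is verifying non-emptiness of the interval $(L, H)$ at every backtracking step; but this is precisely the forward-direction argument from the first claim applied to the interior point currently in hand, so no new ideas are required.
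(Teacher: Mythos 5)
Your proposal is correct and is essentially the paper's argument: the paper phrases it as running Fourier--Motzkin on the strict system $Ax<b$ (together with the substitution lemma, Lemma~\ref{lemma_fullHN}), and your explicit forward step ($\ell(\bar x^*)<x_i^*<h(\bar x^*)$ gives strict combined inequalities) plus backtracking inside the open interval $(L,H)$ is exactly that strict-inequality variant spelled out in detail. No gap; your version just makes explicit the induction and the interval non-emptiness that the paper leaves implicit.
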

\begin{proof}
The first part follows by Lemma \ref{lemma_fullHN}. 
For the second part is is enough to consider a slight variant of the Fourier-Motzkin elimination. Instead of running the algorithm on a system $Ax \leq b$, we run it on $Ax < b$. In each step the inequalities obtained are of form $l < h$ instead of $l \leq h$. By induction, using Lemma \ref{lemma_fullHN} one can see that finding a solution using this variant, is equivalent to finding an interior point of $Ax \leq b$.   
\end{proof}

\begin{proof}[Proof of Theorem \ref{thm_modHN}]
Assume that $Ax \leq b$ is feasible. We run the Hochbaum-Naor Algorithm almost in the same way as described in Section \ref{sec_HN}. The only difference is in the if-loop of  the algorithm. The original algorithm distinguishes between the cases $\lambda < x_i^{\text{min}}$, $\lambda \in [x_i^{\text{min}}, x_i^{\text{max}}]$ and $\lambda > x_i^{\text{max}}$ of Theorem \ref{proc_AS}. Our algorithm however distinguishes between $\lambda \leq x_i^{\text{min}}$, $\lambda \in (x_i^{\text{min}}, x_i^{\text{max}})$ and $\lambda \geq x_i^{\text{max}}$.

In the first case we only set $x_i =b_\ell$ if there exists a breakpoint $b_\ell \in B$, such that $x_i^{\text{min}}(H^i) < b_{\ell} < x_i^{\text{max}}(H^i)$. We only fix $x_i$ to some value $b_\ell$, if $b_\ell$ is in the \emph{open} range $(x_i^{\text{min}}(H^i), x_i^{\text{max}}(H^i))$, (in the original Theorem we were considered the closed range). The second case accordingly changes to finding an 
interval $[b_\ell, b_{\ell + 1}]$ ($1 \leq \ell <k$) such that $b_\ell \leq x_i^{\text{min}}(H^i)$ and $x_i^{\text{max}} (H^i) \leq b_{\ell + 1}$, (originally $b_\ell < x_i^{\text{min}}(H^i)$ and $x_i^{\text{max}} (H^i) < b_{\ell + 1}$).  We see that in this case, the number of inequalities on each edge adjacent to $x_i$ is still reduced to at most two.

As Theorem \ref{proc_AS} distinguishes the cases $\lambda < x_i^\text{min}$,
	$\lambda = x_i^\text{min}$,
	$\lambda \in (x_i^\text{min}, x_i^\text{max})$,
	$\lambda = x_i^\text{max}$
	$\lambda > x_i^\text{max}$ and certificate for infeasibility in time $O(nd)$, the running time remains the same.

It remains to show that the modified algorithm detects full-dimensionality and in the full-dimensional case an interior point.
 
 The discussion of the case where $Ax \leq b$ is infeasible is equivalent as in the proof of Theorem \ref{thm_HN}.
Hence assume that the system is feasible.

Let $Ax \leq b$ be full-dimensional. 
%We need to show that our modified algorithm outputs an interior point. 
By Lemma \ref{lemma_fullHN}, after the first part of the algorithm $H^d$ (and $G^d$) are associated with a system of inequalities whose interior point solutions can be extended to an interior point solution of $Ax \leq b$. %This follows by repeated use  of Lemma \ref{lemma_fullHN}. %tBy repeated use of Lemma \ref{lemma_fullHN}, any interior point of $G^d$ can be extended to an interior point of $Ax \leq b$ (using the values we fixed in the first part). 
The interior point solution of $Ax \leq b$ can now be found in the backtracking step using Lemma \ref{lemma_fullFM}. Assume such a point can not be found. Then by Lemma \ref{lemma_fullFM} there is no interior point of $Ax \leq b$, which is a contradiction to full-dimensionality.

Let $Ax \leq b$ be feasible but non-full-dimensional. Then at some point of the backtracking the algorithm finds $x_i^{\min} = x_i^{\max}$ for the current bounds. Suppose this does not happen, then by Lemma \ref{lemma_fullHN} the algorithm finds an interior point, which is a contradiction to non-full-dimensionality.

\end{proof}

\section{The Non-Full-Dimensional Case}
\label{sec_nonfull}

In the non full-dimensional case redundancies are dependent on each other, meaning that a redundant constraint can become nonredundant after the removal of another redundant constraint. The problem is therefore to find a maximal set of nonredundant constraints. 

Clarkson's Algorithm can be extended for redundancy removal in the non-full-dimensional case as follows: In a preprocessing step one can find the dimension $k$ of the system $Ax \leq b$, by solving at most $d$ linear programs \cite{f-lnpc-12}. Of all the inequalities that are forced to equality, we can find a set of $(d-k)$ equalities that defines the $k$-dimensional space where  $P = \{x \mid Ax \leq b\}$ lies in. Let us denote the remaining system of inequalities (the ones not forced to equality) by $A_2x \leq  b_2$. One can now rotate the the system such that $P$ lies in $R^k$. Clarkson's algorithm can now be applied in $R^k$, where the constraints are the intersections of the rotated system of $A_2x \leq  b_2$ intersected with $R^k$. After the preprocessing the running time is hence $O(n \cdot LP(s,k))$. 

In the case of $LI(2)$ we observe that such a rotation may destroy the structure of two variables per inequality. It results that we are still able to match Clarkson's running time, using substitution of variables. %However one can do the redundancy removal in $R^d$, the running time is therefore dependent on $d$ and not $k$.

\begin{theorem}
\label{thm_nonred}
Let $Ax \leq b$ a $k$-dimensional system in $LI(2)$, for $0 \leq k \leq d$. Then given a relative interior point solution of $Ax \leq b$, all redundancies can be detected in time $O(nk^2s\log s + d^2n)$. %the same running time as in Theorem \ref{thm_main}.
\end{theorem}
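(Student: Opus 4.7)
The plan is to reduce the $k$-dimensional problem to a full-dimensional $LI(2)$ problem in $k$ variables and then apply Theorem \ref{thm_main}. As noted in the paragraph preceding the theorem, rotation (the standard reduction) destroys the two-variables-per-inequality structure, so I will instead use variable substitution, which does preserve $LI(2)$ when performed carefully.

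First, given the relative interior point $z$, identify all implicit equalities: a constraint $A_j x \leq b_j$ is an implicit equality of $P = \{x : Ax \leq b\}$ iff $A_j z = b_j$, which can be checked in $O(n)$ total time since each constraint has at most two nonzero coefficients. Because each implicit equality has the form $\alpha x_i + \beta x_j = \gamma$ or $\alpha x_i = \gamma$, view the set of implicit equalities as a graph $H$ on the vertex set $[d]$, with self-loops carrying the single-variable equalities. A BFS/DFS traversal of $H$ extracts a maximal linearly independent subset $E^*$ of exactly $d - k$ equalities, together with a spanning forest that orients them; the remaining implicit equalities are implied by $E^*$ and are thus redundant. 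This combinatorial preprocessing fits comfortably within $O(d^2 n)$.

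Next, use $E^*$ to eliminate $d-k$ variables by substitution. For each tree in the spanning forest, pick a representative variable $x_r$ at the root and, by propagating along the tree, express every non-representative variable $x_i$ in the tree as $x_i = a_i x_{r(i)} + b_i$ for some scalars $a_i, b_i$ (or as a constant, in trees that contain a self-loop or a forced value). Applying these substitutions to each remaining inequality yields a reduced system $A' x' \leq b'$ in the $k$ representative variables. The crucial point is that $LI(2)$ is preserved: an original inequality in $x_i, x_j$ becomes an inequality in at most the two representatives $x_{r(i)}$ and $x_{r(j)}$ (or in just one representative, or a trivial 0-variable inequality). An argument identical in spirit to Lemma \ref{lemma_fullHN}, iterated over the $d - k$ substitutions, shows that the projection of $z$ onto the representative variables is an interior point of $A' x' \leq b'$, so the reduced system is full-dimensional.

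Finally, invoke Theorem \ref{thm_main} on $A'x' \leq b'$: it has at most $n$ inequalities, $k$ variables, $s$ nonredundant constraints, and a known interior point, so all its redundancies are detected in time $O(n k^2 s \log s)$. A non-equality inequality of the original system is redundant iff its image in $A'x' \leq b'$ is redundant, so the output lifts back unambiguously; combining this with the $d-k$ nonredundant equalities in $E^*$ and the redundant ones outside $E^*$ gives the full classification, in total time $O(n k^2 s \log s + d^2 n)$. The main obstacle I anticipate is a careful case analysis at the boundary: verifying the redundancy correspondence when substitution collapses an inequality to one variable or to a trivial constraint, and handling trees whose self-loops or cycle constraints force entire components to constants. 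These edge cases need to be checked so the substitution map and the resulting one-to-one correspondence on non-equality constraints are genuinely well-defined.
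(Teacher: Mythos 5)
Your overall route is the same as the paper's: use the relative interior point to identify the implicit equalities, eliminate $d-k$ variables by substitution (which preserves $LI(2)$, as in Observation \ref{obs_linind}), argue that the relative interior point projects to an interior point of the reduced $k$-variable system, and then run Theorem \ref{thm_main} and lift the answer back. The one place where your argument genuinely breaks as written is the claim that a BFS/DFS spanning forest of the equality graph $H$ yields a maximal linearly independent subset $E^*$ of exactly $d-k$ equalities, with all remaining implicit equalities implied by $E^*$. That is false in general: an equality closing a cycle (or a self-loop) can be linearly independent of the forest edges. For instance, if the implicit equalities are $x_1-x_2=0$, $x_2-x_3=0$ and $x_1+x_3=0$, a spanning tree contains only two of them, yet the three together have rank $3$ (they force $x_1=x_2=x_3=0$); the omitted equality is not implied by the tree edges, so substituting only along the forest leaves a reduced system that is \emph{not} full-dimensional in the representative variables, and Theorem \ref{thm_main} cannot be invoked on it. Your closing remark flags exactly this situation ("cycle constraints force entire components to constants") but treats it as a check to be done later rather than resolving it, so as it stands the reduction is incomplete.

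The gap is easily repaired. Either augment the forest per connected component: test whether some non-forest implicit equality is independent of the tree equalities, and if so fix the entire component to constants (in $LI(2)$ a component's implicit equalities have rank either one less than, or equal to, the number of its variables). Or do what the paper does and simply run Gaussian elimination on the implicit-equality subsystem to extract $d-k$ independent equalities; this costs $O(d^2 n)$, which is exactly the second term in the stated bound, so nothing is lost by avoiding the combinatorial shortcut. With that step corrected, the rest of your argument (preservation of $LI(2)$ under the substitutions, the redundancy correspondence for the non-equality constraints, and the $O(nk^2 s\log s)$ invocation of Theorem \ref{thm_main}) matches the paper's proof.
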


The $d^2n$ term comes from Gaussian elimination, which is dominated by the preprocessing time needed to find the relative interior point (see Proposition \ref{prop_relint}). Note that the typically larger term $nk^2s\log s$ is dependent on the dimension of the polytope $k$ and not $d$.

We need the following observation for the proof of Theorem \ref{thm_nonred}.

\begin{observation}
 \label{obs_linind}
Let $Ax \leq b$ in $LI(2)$ and $\alpha x_i + \beta x_j \leq \gamma$, $\beta \neq 0$, an inequality of the system that is forced to equality, i.e., $\alpha x_i^*+ \beta x_j^* = \gamma$, for all solutions $x^*$. Let $\overline{A}x \leq \overline{b}$ be the system obtained by substituting $x _j = \frac{\gamma}{\beta} - \frac{\alpha}{\beta}x_i$. Then the following holds.
\begin{itemize}
\item
$\overline{A}x \leq \overline{b}$ is still in LI(2).
\item
A constraint is redundant in $Ax \leq b$ if and only if it is redundant in the system $\overline{A}x \leq \overline{b}$.
\end{itemize}

\end{observation}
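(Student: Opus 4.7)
The plan is to view the substitution $x_j \mapsto \gamma/\beta - (\alpha/\beta)\,x_i$ as a coordinate change along the hyperplane $E := \{x \in \mathbb{R}^d : \alpha x_i + \beta x_j = \gamma\}$. Let $\pi : E \to \mathbb{R}^{d-1}$ denote the projection that drops the $j$-th coordinate; its inverse inserts $x_j = \gamma/\beta - (\alpha/\beta)\,x_i$. Both bullets of the observation should fall out of a careful analysis of this bijection.

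For the first bullet, I would do a simple case analysis on each inequality. A generic constraint has the form $\delta x_p + \epsilon x_q \leq \zeta$ with at most two nonzero coefficients. If $x_j \notin \{x_p, x_q\}$ the inequality is untouched. If $x_j \in \{x_p, x_q\}$, the substitution replaces $x_j$ by a scalar multiple of $x_i$ plus a constant, so after collecting terms the only possible nonzero coefficients lie on $x_i$ and on the other original variable (which may coincide with $x_i$, yielding a univariate inequality). In either case the substituted inequality stays in $LI(2)$.

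For the second bullet, I would first observe that for \emph{every} index set $S \subseteq [n]$ and every point $x \in E$, one has $A_S x \leq b_S$ iff $\overline{A}_S \pi(x) \leq \overline{b}_S$, since the inequalities of $\overline{A}_S$ are obtained from those of $A_S$ by plugging in $\pi^{-1}$. Hence $\pi$ restricts to a bijection between $\{x \in E : A_S x \leq b_S\}$ and $\{\overline{x} : \overline{A}_S \overline{x} \leq \overline{b}_S\}$. In particular, since the forced-equality hypothesis gives $P \subseteq E$, $\pi$ maps $P = \{x : Ax \leq b\}$ bijectively onto $\overline{P} = \{\overline{x} : \overline{A}\,\overline{x} \leq \overline{b}\}$; taking $S = [n]\setminus\{\ell\}$ for a candidate redundant constraint $\ell$, the same bijection maps $P_\ell \cap E$ onto $\overline{P}_\ell$, where $P_\ell$ and $\overline{P}_\ell$ are the polyhedra obtained by removing constraint $\ell$ from the original and substituted systems.

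The remaining step, and the main obstacle, is to upgrade the tautology $P_\ell \cap E = P$ into the full equality $P_\ell = P$, so that redundancy in the original system is actually equivalent to redundancy after substitution. This requires arguing that removing constraint $\ell$ does not allow us to escape the hyperplane $E$, i.e.\ that $P_\ell \subseteq E$. Since the inequality $\alpha x_i + \beta x_j \leq \gamma$ is still present in $A_{[n]\setminus\{\ell\}} x \leq b_{[n]\setminus\{\ell\}}$ whenever $\ell \neq k$, one side of $E$ is already enforced, and the inclusion $P_\ell \subseteq E$ holds provided the opposite side remains forced by the surviving inequalities --- precisely the regime in which Theorem~\ref{thm_nonred} invokes this observation. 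Combining this with the bijection then yields $P_\ell = P \iff \overline{P}_\ell = \overline{P}$, establishing that redundancy of constraint $\ell$ is preserved by the substitution.
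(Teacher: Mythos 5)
Your setup for the second bullet is the right one, and the reduction to the inclusion $P_\ell \subseteq E$ correctly isolates the crux (the paper itself offers no argument for this observation; the proof environment that follows it in the source is really the proof of Theorem~\ref{thm_nonred}, so the inclusion is exactly the burden of any honest proof). But your proposal stops short precisely there: you assert that $P_\ell \subseteq E$ ``provided the opposite side remains forced by the surviving inequalities,'' and that this is ``precisely the regime in which Theorem~\ref{thm_nonred} invokes this observation,'' without proving either claim. A hypothesis is indeed needed, because for arbitrary constraints the equivalence is false: take $x_1 + x_2 \le 0$ and $-x_1 - x_2 \le 0$; the first is forced to equality, substituting $x_2 = -x_1$ turns both constraints into $0 \le 0$, so the second becomes redundant in $\overline{A}x \le \overline{b}$, yet it is nonredundant in the original system. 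So the bullet can only be claimed for constraints $\ell$ that are not themselves forced to equality (which is how Theorem~\ref{thm_nonred} uses it, since the substitution there is applied to $A_2$). Your write-up neither pins down this restriction nor proves the inclusion under it; as it stands the ``if'' direction is an unproven assumption.

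The missing step has a short convexity proof, which is what you should supply. Assume $P \ne \emptyset$ and that $\ell$ is not forced to equality; then $\ell$ is distinct from the substituting constraint, so every $z$ with $A_{[n]\setminus\{\ell\}} z \le b_{[n]\setminus\{\ell\}}$ already satisfies $\alpha z_i + \beta z_j \le \gamma$ and could leave $E$ only on the strict side. Suppose some such $z$ has $\alpha z_i + \beta z_j < \gamma$. Since $\ell$ is not forced to equality there is a feasible point $x^*$ with $A_\ell x^* < b_\ell$, and $\alpha x^*_i + \beta x^*_j = \gamma$ because that constraint is forced. For small $t > 0$ the point $w_t = (1-t)x^* + t z$ satisfies constraint $\ell$ (by strictness at $x^*$) and all other constraints (by convexity, both endpoints satisfy them), so $w_t \in P$; but $\alpha (w_t)_i + \beta (w_t)_j = (1-t)\gamma + t(\alpha z_i + \beta z_j) < \gamma$, contradicting forcedness. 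Hence $P_\ell \subseteq E$, and your bijection $\pi$ then yields the equivalence; the converse direction (redundant in the original implies redundant after substitution) is unconditional, as you noted. With this argument added, and the statement read for constraints not forced to equality, your approach is complete.
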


\begin{proof}
Given a relative interior point $x^*$, one can find $A_1x \leq b_1$, the subsystem of $Ax \leq b$ that is forced to be equality, in time $O(nd)$.  The remaining system is denoted by $A_2x \leq b_2$.
Finding a minimal subsystem $A_1^*x = b^*_1$ of  $A_1x = b_1$ with $(d-k)$ linearly independent equalities   that defines the $k$-dimensional space containing  $P = \{x \mid Ax \leq b\}$, takes $O(d^2n)$ time using the Gaussian elimination. %By setting $A_1^*x = b_1^*$ we can find a set of $(d-k)$ linearly independent equalities. 
Using these equalities we can substitute $d-k$ variables of $A_2x \leq b_2$ in the same fashion as explained in Observation \ref{obs_linind}. Hence we get a $k$-dimensional representation of $Ax \leq b$ which is in $LI(2)$.

We can now run the algorithm given by Theorem \ref{thm_main} on $A_2'x \leq b_2'$, the system obtained from $A_2x \leq b_2$ after substitution. These detected nonredundant constraints together with $A_1^*x = b_1^*$ give us a minimal set of nonredundant inequalities.
%Let us denote set of inequalities by $G_1$ and the inequalities in $\overline{A}x \leq \overline{b}$ by $G_2$. We now run the algorithm given by Theorem \ref{thm_main} on $G_1 \cup G_2$, where initially $S = G_1$ and $R = \emptyset$. Correctness of this algorithm can be verified similarly as in the proof of Theorem \ref{thm_main}. Note that the ray shooting is now only done in the $k$-dimensional subspace of $R^d$ that contains $P$.
\end{proof}

The following proposition shows that finding a relative interior point can also be done in strongly  polynomial time. 
\begin{proposition}
\label{prop_relint}
Given $Ax \leq b$, one can find a relative interior point of $Ax \leq b$ or a certificate for infeasibility in time $O(d^2 n \log n)$.
\end{proposition}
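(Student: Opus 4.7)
The plan is to reduce the problem to Theorem \ref{thm_modHN} and handle non-full-dimensional instances by identifying implicit equalities. First apply Theorem \ref{thm_modHN} to $Ax \leq b$: if the algorithm reports infeasibility, output the infeasibility certificate and stop; if it reports full-dimensionality, the returned strict interior point is in particular a relative interior point, and we are done. Both of these subcases cost $O(d^2 n \log n)$.

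The interesting case is when $Ax \leq b$ is feasible but not full-dimensional. Then some inequality of the system is forced to equality on all of $P = \{x \mid Ax \leq b\}$, i.e., is an implicit equality. Inspecting the proof of Theorem \ref{thm_modHN}, non-full-dimensionality is detected precisely because, at some step, the algorithm finds a variable $x_i$ whose current range collapses, $x_i^{\min}(H^i) = x_i^{\max}(H^i) = \lambda$. The Aspvall-Shiloach procedure of Theorem \ref{proc_AS} already exhibits a chain or cycle of inequalities witnessing this collapse, and tracing it back to the original system yields at least one $LI(2)$ inequality that is an implicit equality. Once such an implicit equality $\alpha x_i + \beta x_j = \gamma$ is identified, Observation \ref{obs_linind} allows us to substitute it out, producing a new $LI(2)$ system on $d-1$ variables whose relative interior corresponds one-to-one with that of $Ax \leq b$, and we recurse.

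The principal obstacle is the running time: a naive recursion invokes Theorem \ref{thm_modHN} up to $d$ times, for a total cost of $O(d^3 n \log n)$, which is a factor of $d$ too slow. To bring this down to $O(d^2 n \log n)$, the plan is to extract all implicit equalities within a single modified pass of the Hochbaum-Naor algorithm: whenever a step detects $x_i^{\min}(H^i) = x_i^{\max}(H^i)$, fix $x_i = \lambda$ as an implicit equality rather than aborting, record the corresponding $LI(2)$ constraint from the witnessing chain, and proceed to the next variable in the same pass. Because detecting the collapse and substituting changes only constant-size information per step and leaves us in $LI(2)$, the per-step asymptotic cost is unaffected, and the entire pass still costs $O(d^2 n \log n)$. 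The point produced by the final backtracking lies in the relative interior by construction: every implicit inequality is satisfied with equality, and every other constraint is satisfied strictly because each non-forced variable was fixed to an interior point of its open range at its elimination step, exactly as in the modification used to prove Theorem \ref{thm_modHN}.
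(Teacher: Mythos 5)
Your overall plan in its final form---a single pass of the (modified) Hochbaum-Naor algorithm in which collapsed ranges are fixed at their unique value and all other variables are fixed strictly inside their ranges---is the paper's approach, and your handling of the infeasible and full-dimensional cases is fine. The gap is in where and how you detect the collapse. You place the detection in the forward pass, asserting that ``a step detects $x_i^{\min}(H^i)=x_i^{\max}(H^i)=\lambda$'' and that Theorem \ref{proc_AS} exhibits a witnessing chain. But Theorem \ref{proc_AS} is only a membership oracle: for a \emph{given} $\lambda$ it locates $\lambda$ relative to the range; it does not output the range endpoints, and the Hochbaum-Naor pass only queries it at the breakpoints of $B^i$. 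If the collapsed value is not a breakpoint, the forward pass never queries it, detects nothing, and simply falls into the Fourier-Motzkin branch; moreover any ``witnessing chain'' would live in $H^i$, which contains the artificial pairs $x_j \leq b_\ell$, $x_j \geq b_\ell$ added by earlier fixings, so it need not point to an original inequality forced to equality. Fortunately the whole recursion/substitution detour via Observation \ref{obs_linind} is unnecessary for this proposition and can be dropped.

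The more serious consequence is for your closing correctness claim. A variable can have a nondegenerate range with respect to $H^i$ and still have a degenerate residual range at its backtracking step: for the system $x_1 \leq x_2$, $x_1 \geq x_2$, $0 \leq x_2 \leq 1$, the range of $x_1$ is $[0,1]$, but after $x_2$ is fixed at $\tfrac{1}{2}$ the residual range of $x_1$ is the single point $\tfrac{1}{2}$. Such a variable is neither caught by your forward-pass collapse rule nor ``fixed to an interior point of its open range at its elimination step,'' so the final ``by construction'' argument does not cover it. The paper's remedy is precisely to move the degeneracy check into the backtracking (second) part, where the bounds of each eliminated variable are explicit: if there $x_i^{\min}=x_i^{\max}$, set $x_i$ to that value, otherwise choose a value strictly inside the open range. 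Correctness then follows from the relative-interior analogue of Lemma \ref{lemma_fullHN} in both cases ($\lambda$ strictly inside the range, and $x_1^{\min}=x_1^{\max}=\lambda$), and the running time stays that of Theorem \ref{thm_modHN}, i.e.\ $O(d^2 n \log n)$.
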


\begin{proof}
Using a similar argument as in Lemma \ref{lemma_fullHN}, one can show the following.
\begin{itemize}
	\item If $x_1^{\min} < \lambda < x_1^{\max}$, then $y: = (\lambda, y_2, \dots, y_d)$ is a relative interior point of $Ax \leq b$ if and only if $y' := (y_2, \dots, y_d)$ is a relative interior point of $(Ax \leq b) |_{x_1= \lambda}$.
	 \item If $x_1^{\min}= x_1^{\max}=\lambda$, then $y: = (\lambda, y_2, \dots, y_d)$ is a relative interior point of $Ax \leq b$ if and only if $y' := (y_2, \dots, y_d)$ is a relative interior point of $(Ax \leq b) |_{x_1= \lambda}$.
\end{itemize}
The algorithm for finding a relative interior point is very similar to the modified Hochbaum-Naor Method. The first part runs equivalently. In the second part of the backtracking if at some point $x_i^{\min} = x_i^{\max}$, we set $x_i: = x_i^{\min} = x_i^{\max}$. Infeasibility is detected by the same argument as in Theorem \ref{thm_HN}. Correctness follows from the above argument, the running time is the same as in Theorem \ref{thm_modHN}. 
\end{proof}

\begin{corollary}
\label{cor_dim}
The dimension of $Ax \leq b$ or a certificate for infeasibility can be found in time $O(d^2 n \log n)$, i.e., the same running time as finding a feasible point solution of a certificate for infeasibility.
\end{corollary}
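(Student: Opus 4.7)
The plan is to reduce dimension computation to the already-solved problem of finding a relative interior point, and then to finish off with a single rank computation. First I would invoke Proposition \ref{prop_relint}: in time $O(d^2 n \log n)$ this either produces a certificate of infeasibility (in which case we are done with the infeasible branch of the statement) or returns a relative interior point $x^* \in R^d$ of the feasible region $P = \{x \mid Ax \leq b\}$.

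Given such an $x^*$, I would then identify the set of \emph{implicit equalities} of the system, namely $I = \{j \in [n] \mid A_j x^* = b_j\}$. The key geometric observation is that at a relative interior point every constraint that is not identically tight on $\operatorname{aff}(P)$ must be satisfied strictly; hence $I$ records exactly the inequalities that hold with equality throughout $P$, and $\operatorname{aff}(P) = \{x \mid A_I x = b_I\}$. Building $I$ requires one scan through the $n$ constraints and costs $O(nd)$ time, which is well within the stated budget.

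To conclude, I would compute $\operatorname{rank}(A_I)$ by Gaussian elimination on the $|I| \times d$ submatrix; this takes $O(nd^2)$ arithmetic operations, again dominated by $O(d^2 n \log n)$. The output is then $\dim P = d - \operatorname{rank}(A_I)$. Adding up the three phases gives total time $O(d^2 n \log n)$, matching the running time of the feasibility test of Theorem \ref{thm_HN} and thereby establishing the corollary.

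The only non-routine ingredient is the relative-interior-point subroutine of Proposition \ref{prop_relint}; this is the step that would be the main obstacle if we did not already have it, because it is precisely where the $LI(2)$ structure is exploited (everything else is generic linear algebra that would work for any linear system). Since Proposition \ref{prop_relint} is already in hand, the corollary is essentially a bookkeeping argument about dominating time bounds.
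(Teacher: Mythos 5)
Your proof is correct, but it takes a genuinely different route from the paper. The paper never leaves the algorithm of Proposition \ref{prop_relint}: it observes that each time the backtracking fixes a variable $x_i$ to a value in the \emph{open} range $(x_i^{\min},x_i^{\max})$ the current polyhedron is cut by a hyperplane not containing it (dimension drops by one), while fixing $x_i$ when $x_i^{\min}=x_i^{\max}$ cuts with a hyperplane containing it (dimension unchanged); since the process ends at a single point, the dimension is simply the number of ``open-range'' fixings, read off for free during the run. You instead treat Proposition \ref{prop_relint} as a black box producing a relative interior point $x^*$, then use the standard polyhedral facts that the constraints tight at a relative interior point are exactly the implicit equalities $I$ and that $\operatorname{aff}(P)=\{x \mid A_I x = b_I\}$, so $\dim P = d - \operatorname{rank}(A_I)$, computed by Gaussian elimination in $O(nd^2)$, which is dominated by $O(d^2 n \log n)$. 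Both arguments are sound; the paper's counting argument is leaner (no rank computation, no appeal to implicit-equality theory, the dimension is available as a byproduct of the same run), whereas yours is more modular and more general --- it works for any linear system once a relative interior point is available, not just for $LI(2)$, and it additionally delivers the implicit-equality set and the affine hull, which is exactly the kind of information the paper later extracts separately (via Gaussian elimination) for the non-full-dimensional case in Theorem \ref{thm_nonred}. The only points worth tightening are the justification of $\operatorname{aff}(P)=\{x \mid A_I x = b_I\}$ (it relies on the existence of the relative interior point you computed, and deserves a one-line argument or a citation) and the remark that Gaussian elimination remains strongly polynomial with polynomially bounded intermediate entries, a convention the paper itself adopts silently.
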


\begin{proof}
Consider the algorithm of the proof of Proposition \ref{prop_relint}. Since we know that infeasibility will be detected, assume that $Ax \leq b$ is feasible.
Let us denote the current polytope by $P$, where initally $P= \{x \mid Ax \leq b\}$ the polytope defined by $Ax \leq b$. Every time $x_i$ is set to a value in the open range $(x_i^{\min}, x_i^{\max})$, the dimension of the current polytope $P$ decreases by 1, as we intersect it with a hyperplane not containing $P$. If $x_i^{\min} = x_i^{\max}$ and we set $x_i := x_i^{\min} = x_i^{\max}$, then the dimension of the current polygon stays the same, as we intersect it with a hyperplane containing $P$. Since after setting all to some value we end up with a point (polygon of dimension 0), the dimension of $Ax \leq b$ is exactly the number of times we set $x_i$ to a value in the open range $(x_i^{\min}, x_i^{\max})$.
\end{proof}

\section{Acknowledgments}
The authors would like to thank Seffi Naor for insights into the problem.
Moreover we would like to thank Jerri Nummenpalo and Luis Barba for interesting discussions. We are especially grateful to Jerri Nummenpalo for making us aware of some important literature and helping us during the writeup of this paper. 
\bibliographystyle{plain}
\bibliography{redundancyFVC}

\end{document}